\DeclareMathOperator*{\SumInt}{%
\mathchoice%
  {\ooalign{$\displaystyle\sum$\cr\hidewidth$\displaystyle\int$\hidewidth\cr}}
  {\ooalign{\raisebox{.14\height}{\scalebox{.7}{$\textstyle\sum$}}\cr\hidewidth$\textstyle\int$\hidewidth\cr}}
  {\ooalign{\raisebox{.2\height}{\scalebox{.6}{$\scriptstyle\sum$}}\cr$\scriptstyle\int$\cr}}
  {\ooalign{\raisebox{.2\height}{\scalebox{.6}{$\scriptstyle\sum$}}\cr$\scriptstyle\int$\cr}}
}
\definecolor{codegreen}{rgb}{0,0.6,0}
\definecolor{codegray}{rgb}{0.5,0.5,0.5}
\definecolor{codepurple}{rgb}{0.58,0,0.82}
\definecolor{backcolour}{rgb}{0.95,0.95,0.92}
\definecolor{light-gray}{gray}{0.95} 
\lstdefinestyle{mystyle}{
    backgroundcolor=\color{light-gray},   
    commentstyle=\color{codegreen},
    keywordstyle=\color{magenta},
    numberstyle=\tiny\color{codegray},
    stringstyle=\color{codepurple},
    basicstyle=\ttfamily\footnotesize,
    breakatwhitespace=false,         
    breaklines=false,   
    captionpos=b,   
    keepspaces=true,
    numbersep=5pt,                  
    showspaces=false,                
    showstringspaces=false,
    showtabs=false,                  
    tabsize=2
}
\theoremstyle{plain}
\newtheorem{theorem}{Theorem}
\newtheorem{corollary}{Corollary}
\newtheorem*{remark2}{Remark}
\theoremstyle{remark}
\newtheorem*{remark}{Example}
\title{The invariant space of multi-Higgs doublet models}
\author{M. P. Bento}
\affiliation{CFTP, Departamento de F\'{i}sica, Instituto Superior T\'{e}cnico,
Universidade de Lisboa,\\
Avenida Rovisco Pais 1, 1049 Lisboa, Portugal}
\emailAdd{miguel.pedra.bento@tecnico.ulisboa.pt}
\abstract{In a model with more than one scalar doublet, the parameter space encloses
   both physical and unphysical information. Invariant theory provides a detailed
   description of the counting and characterization of the physical parameter space.
   The Hilbert series for the 3HDM is computed for the first time
   using partition analysis, in particular Omega calculus, giving rise to the
   possibility of a full description of its physical parameters. A rigorous
   counting of the physical parameters is given for the full
   class of models with $N$ scalar doublets as well as a decomposition of the Lagrangian
   into irreducible representations of $\mathrm{SU}(N)$. For the first time
   we derive a basis-invariant technique for counting parameters in a Lagrangian
   with both basis-invariant redundancies and global symmetries.}
\begin{document}
\maketitle

\section{Introduction}

In high-energy physics symmetries play a fundamental role in
model building of both theory and phenomenology multi-scalar theories. 
Nevertheless, we are often interested on the invariants of these symmetries as they
may relate to gauge invariance, physical parameters or the construction of the Lagrangian.

The scalar potential has been thoroughly studied in the physics literature,
where various basis-invariant methods 
\cite{Botella:1994cs, Ginzburg:2004vp, Gunion:2005ja, Davidson:2005cw} and
more group-theoretical methods such as billinears
\cite{Nagel, Maniatis:2006fs, Maniatis:2006jd, Maniatis:2007vn, Nishi:2006tg, Nishi:2007nh, Nishi:2007dv, Ivanov:2005hg, Ivanov:2006yq, Degee:2009vp}
were used. Recent work has developed
a new perspective on the group structure of the parameter space by using invariant theory.
The characterization of invariants has also been instrumental
to study physical parameters
and CP violation. The Hilbert series of the 2HDM was
first obtained in \cite{Bednyakov:2018cmx} and later
studied in the context of CP violation in \cite{Trautner:2018ipq}. 
With this technique,
the complete roadmap to the basis-invariant
description of the 2HDM built on invariant theory was achieved in \cite{Bento:2020jei}.

Invariant theory, a field of algebraic geometry, concerns the study of precisely these invariants
and was developed by many prominent mathematicians such as David Hilbert, Emmy Noether and
physicist Hermann Weyl.
It has
been used in the context of string theory \cite{Benvenuti:2006qr}. 
More recently, an excellent review
of these methods was given in \cite{Lehman:2015via} along with strategies for
handling couplings with derivatives in EFTs.
As a theory, it also provides a framework for a full group-theoretical
perspective of the parameter space.

As complete as it may be, invariant theory relies heavily on the computation
of a formal quantity, known as the Hilbert series. In it lies the full count
and characterization of any physical parameter in a theory. A shortcoming of
this strategy is the sometimes insurmountable
calculation of a very large number of residues of multivariate
integrals. Here, we introduce a technique developed by P. MacMahon \cite{Macmahon},
later called Omega calculus. With it, many challenging complex integrals
become attainable, as we will show with the very complicated case of the 3HDM.

By using several results obtained throughout the years in the mathematical literature,
we extract properties for the class of NHDM (N Higgs Doublet models). 
In particular,
we show that the Hilbert series
is not needed to compute the number of physical parameters in multi-Higgs doublet models.

With the knowledge on how the vector space of a Lagrangian decomposes in irreducible
representations of  any group, we derive a technique that counts
the number of parameters in a Lagrangian with
both basis-invariant redundancies and global symmetries. This method
does not require knowledge of invariant theory, only of the group structure
of the symmetry group $G$.

\section{Group structure of the scalar potential}\label{sec:group_s}

There are essentially two perspectives regarding the group structure of the scalar
sector of a NHDM: the fields and their representations \cite{Nishi:2006tg, Ivanov:2005hg},
and the parameter space
and its representation \cite{Bednyakov:2018cmx,Trautner:2018ipq}, both under a basis
transformation group. We will follow the latter by decomposing the parameter
space into irreducible representations of $\mathrm{SU}(N)$.
Assigning this structure
will allow us to build the invariant analogue of the parameter space under
basis transformations. In other words, build the physical parameter
space of any multi-Higgs doublet model.

In the most general NHDM, the Lagrangian potential can be written as
\begin{equation}\label{eq:NHDM_lag}
    V_{H} = \mu_{ij} (\Phi^\dagger_i \Phi_j) + z_{ij,kl} (\Phi^\dagger_i \Phi_j) 
    (\Phi^\dagger_k \Phi_l) \, ,
\end{equation}
where the matrices follow hermiticity and symmetry properties
$\mu_{ij}=\mu_{ji}^*$ and $z_{ij,kl} = z_{kl,ij} = z_{ji, lk}^*$. It is well known
that eq.~\eqref{eq:NHDM_lag} is not unique and that we can always perform a basis
transformation $\mathrm{SU}(N)$ such that we generate the same physical
theory. Furthermore, the fields transform under the fundamental
representation $\mathbf{r}_f$. Thus, $\mu_{ij}$ and $z_{ij,kl}$ transform as
\begin{align}
    &\mu_{i j} \rightarrow \bar{\mathbf{r}}_f  \, \otimes \, \mathbf{r}_f \, , \nonumber \\[2mm]
    &z_{ij,kl} \rightarrow \mathrm{Sym} \left(  \mathbf{r}_f  \, \otimes \,
    \mathbf{\mathbf{r}}_f
    \, \otimes \, \bar{\mathbf{r}}_f  \, \otimes \, \bar{\mathbf{r}}_f \right) \, ,
\end{align}
where $\mathrm{Sym}$ denotes essentially the symmetry property of $z_{ij,kl}$. With
this decomposition we find
\begin{align}\label{eq:dimV_sym}
    \mathrm{Sym} \left( \mathbf{r}_f  \, \otimes \,
    \mathbf{\mathbf{r}}_f
    \, \otimes \, \bar{\mathbf{r}}_f  \, \otimes \, \bar{\mathbf{\mathbf{r}}}_f \right)
    &= \mathrm{Sym} \left( \left[\mathrm{Sym}^2 (\mathbf{r}_f)  \, \oplus \, 
    \mathrm{Alt}^2 (\mathbf{r}_f)
    \right] \, \otimes \, 
    \left[\mathrm{Sym}^2 (\bar{\mathbf{r}}_f )  \, \oplus \, \mathrm{Alt}^2 (\bar{\mathbf{r}}_f )
    \right] \right) \nonumber \\[2mm]
    &= \left[ \mathrm{Sym}^2 (\mathbf{r}_f) \,
    \otimes \, \mathrm{Sym}^2 (\bar{\mathbf{r}}_f ) \right] \, \oplus \, 
    \left[ \mathrm{Alt}^2 (\mathbf{r}_f) \, \otimes \, \mathrm{Alt}^2 (\bar{\mathbf{r}}_f )  \right]
    \, ,
\end{align}
where we follow the well known group theory result $\mathbf{r} \, \otimes \, \mathbf{r}
= \mathrm{Sym}^2(\mathbf{r}) \, \oplus \, \mathrm{Alt}^2(\mathbf{r})$,
where $\mathrm{Sym}$ and $\mathrm{Alt}$ are respectively the symmetric and antisymmetric parts
of the tensor product. In section~\ref{sec:symmetries} we will revisit eq.~\eqref{eq:dimV_sym}.

The results above point to a full decomposition of the parameter space of the scalar
potential in terms of irreducible representations of $\mathrm{SU}(N)$.
Thus, we define the vector space of the parameters as $V$, defined as the space
that transforms with
\begin{equation}\label{eq:V_decomposition}
    V = \mu \, \oplus \, z \rightarrow \bar{\mathbf{r}}_f  \, \otimes \, \mathbf{r}_f \, \oplus \,
    \left[ \mathrm{Sym}^2 (\mathbf{r}_f) \,
    \otimes \, \mathrm{Sym}^2 (\bar{\mathbf{r}}_f ) \right] \, \oplus \, 
    \left[ \mathrm{Alt}^2 (\mathbf{r}_f) \, \otimes \, \mathrm{Alt}^2 (\bar{\mathbf{r}}_f )  \right]
    \, .
\end{equation}
The dimension $\dim V$ is the number of parameters in the potential. This can be readily computed
from eq.~\eqref{eq:V_decomposition}. With $\dim \mathbf{r}_f = N$, the number of doublets,
and
\begin{align}
    \dim \left[ \mathrm{Sym}^2 (\mathbf{r}_f) \right] &= \frac{N(N+1)}{2} \, , \nonumber \\[2mm]
    \dim \left[ \mathrm{Alt}^2 (\mathbf{r}_f) \right] &= \frac{N(N-1)}{2} \, ,
\end{align}
we finally get
\begin{equation}
    \dim V = N^2 + \left( \frac{N(N+1)}{2} \right)^2 + \left( \frac{N(N-1)}{2} \right)^2
    =
    \frac{N^2(N^2+3)}{2} \, ,
\end{equation}
as the number of parameters in the NHDM \cite{Nishi:2006tg, Ferreira:2008zy, Ivanov:2010ww, Bento:2017eti}.
\begin{remark}
In the 2HDM we may decompose
\begin{equation}
    \mu_{ij} \rightarrow \mathbf{2} \, \otimes \, \mathbf{2} = \mathbf{1} \, \oplus \, \mathbf{3}
    \, ,
\end{equation}
and similarly we decompose
\begin{align}
    z_{ij,kl} \rightarrow \mathrm{Sym} \left(  \mathbf{2}  \, \otimes \,
    \mathbf{2}
    \, \otimes \, \mathbf{2} \, \otimes \, \mathbf{2} \right)
    &= [ \mathbf{3} \, \otimes \, \mathbf{3}] \, \oplus \,
    [\mathbf{1} \, \otimes \, \mathbf{1}] \nonumber \\[2mm]
    &= 2 (\mathbf{1}) \, \oplus \, \mathbf{3} \, \oplus \, \mathbf{5} \, .
\end{align}
Then the vector space of parameters $V$ transforms as
\begin{equation}\label{eq:V_decomposition_2hdm}
    V \rightarrow 3(\mathbf{1}) \, \oplus \, 2(\mathbf{3}) \, \oplus \, \mathbf{5} \, ,
\end{equation}
and $\dim V = 14$, the number of parameters in the 2HDM.
\end{remark}

The decomposition of $V$ is instrumental to the analysis of the physical parameters
of the NHDM. Throughout this paper we won't concern ourselves with the structure of the
representation themselves, as it is not needed for any of our results. Nevertheless,
a systematic approach to this calculation is given in ref.~\cite{Trautner:2018ipq}
with the use of projectors.

\section{The invariant space formalism}

The decomposition of $V$ into irreducible representations provides a framework
for how a group $G$ acts on $V$. Nevertheless,
no physical parameter is given in experiment in matrix form. The computation
of physical parameters is often built on contracted tensors for which the
answer is a number, a polynomial in the Lagrangian parameters \cite{Botella:1994cs}.
Thus, it is important to introduce the notion of polynomial rings and their properties.

We will be rather formal with our notation in order to compare with the mathematical
literature. Consequently we will provide examples to map the formalism
to the analysis of the NHDM matrices $\mu_{ij}$ and $z_{ij,kl}$.

\subsection{The ring of invariants}

Let us consider a vector space $V$ as the space with dimension $\dim V$ spanned
by the basis $x_1,\dots,x_n$. The polynomial ring $K[V]=K[x_1,\dots,x_n]$ 
is then composed by polynomial functions in $x_i$ and span every algebraic combination
of the basis elements $x_i$ in the field $K$. We will consider the field
$K$ to be the complexes $\mathbb{C}$.
Furthermore, we consider the action of a group $G$ on $V$, for which each element $g \in G$ has
a representation $\rho(g)$ acting on $V$. We will abuse the notation
by stating $g$ instead of $\rho(g)$. 
Then we may define the ring of invariants $K[V]^G$ to be
\begin{equation}
    K[V]^G := \{ x \in K[V] \, \, | \, \, g.x = x \} \, ,
\end{equation}
which comprises all algebraic combinations of the parameters in $V$ which are
invariant under the action of $G$.
The ring of invariants $K[V]^G=K[f_1, \dots, f_r]$ is 
then generated by $f_1, \dots, f_r$, the primary invariants.

There are several noteworthy explanations so far. We begin with a space $V$ to which we apply the group $G$.
Then, we collect the invariants of the action of $G$ such that the remaining space is generated by elements 
called the
primary invariants. We already see that $K[V]^G \subseteq K[V]$, i.e. the invariant
ring is contained in the original one.
The dimension of a ring is called the Krull dimension and it is the minimum number
of generators of the ring.
The dimension of the initial ring $K[V]$
is given by 
\begin{equation}
    \dim (K[V]) = \dim V = \dim (K[x_1,\dots,x_n]) = n \, ,
\end{equation}
while for the ring of invariants we define the Krull
dimension as 
\begin{equation}
    \dim (K[V]^G) = \dim (K[f_1,\dots,f_r]^G) = r \, .
\end{equation}
Thus, $r \leq n$.
The Krull dimension of $K[V]^G$ also has a crucial
interpretation, it is the number of physical
parameters of the theory and it will be a meaningful quantity throughout this paper.

\begin{remark}
Let us consider a scalar potential with only $\mu_{ij}$. Then,
$V= \mathcal{A}$, the space of $2 \times 2$ Hermitian matrices
and $\mu \in \mathcal{A}$. The dimension of $V$ is then
$\dim V = 4$, the parameters $\mu_{11}, \mu_{12}, \mu^*_{12}, \mu_{22}$.
Hence, the polynomial
ring in the complexes is 
\begin{equation}
    \mathbb{C}[V] = \mathbb{C}[\mu_{11}, \mu_{12}, \mu^*_{12}, \mu_{22}]
    = \{ \mu_{11}, \mu_{12} + 2 \mu_{22}^2,  \dots \}
\end{equation}
consisting on all of the polynomials in the parameters.
If we choose to study the invariants under basis transformations $G=\mathrm{SU}(2)$
we may define the invariant ring as
\begin{equation}
    \mathbb{C}[V]^G = \mathbb{C}[\Tr \mu, \det \mu]^{\mathrm{SU}(2)}
    = \{ \Tr \mu, \det \mu + (\Tr \mu)^5,  \dots \} \ ,
\end{equation}
which consists on all polynomials which are $\mathrm{SU}(2)$-invariant.
The Krull dimension of $\mathbb{C}[V]^G$ is given by $\dim \mathbb{C}[V]^G = 2$. We
chose the two corresponding
parameters to be the trace and the determinant of $\mu_{ij}$.
\end{remark}

The ring of invariants $K[V]^G$ is solely responsible for all the physical information in
a theory. Knowing its generators is equivalent to knowing all of the physical
parameters, their CP properties, and in principle, even their impact on physical
processes.\footnote{The degrees of the generators may be used in principle
to track the order of a process, e.g. an invariant of degree six should not
appear in low order Feynman diagrams.}
To this end, we need a tool to
describe the generators of the invariant space, the Hilbert series.

For simplicity, while keeping some of the common mathematical
notation, we will interchange $K[V]$ with $R$
and $K[V]^G$ with $R^G$.

\subsection{The Hilbert series}

The Hilbert series is a very powerful tool for the characterization of $K[V]^G$, the ring of invariants.
The series itself is given by
\begin{equation}\label{eq:hil_series_deff}
    H(K[V]^G,t) = \sum_{k=0}^\infty \dim (R^G_k) \, t^k \, ,
\end{equation}
where $\dim (R^G_k)$ are the number of invariants of degree $k$ which are invariant under
the group $G$ and
$t$ is a token variable describing the degree of the invariants. The degree of the invariants
describes the degree of the polynomials in the parameters, e.g. in the previous
example $\Tr \mu$ is a degree
one invariant while $\det \mu$ is a degree two invariant.
Contrary to the primary invariants (generators), 
these need not be independent, as they total the number of
invariants.

The Hilbert series can also be written in a closed form, generally as
\begin{equation}\label{eq:general_hilbert_closed}
    H(K[V]^G,t) = \frac{P(t)}{Q(t)} = 
    \frac{P(t)}{ (1-t)^{d_1} (1-t^2)^{d_2} \dots (1-t^m)^{d_m}} \, ,
\end{equation}
where $P(t)$ is a polynomial. The denominator of the Hilbert series 
describes the degree and number of invariants under the action of a group $G$.
In eq.~\eqref{eq:general_hilbert_closed} we count $d_1$ invariants of degree one,
$d_2$ invariants of degree two, etc.

From eq.~\eqref{eq:general_hilbert_closed} we can read several properties
of the invariant ring $K[V]^G$.
In the context of the Hilbert series,
the Krull dimension $r$ is such that the limit
\begin{equation}\label{eq:lim_hil_krull}
    \lim_{t \rightarrow 1} (1-t)^r H(K[V]^G,t) = \gamma
\end{equation}
is neither infinite nor zero. Alternatively, this also means that in
eq.~\eqref{eq:general_hilbert_closed}
we can read the dimension as
\begin{equation}
    \dim (K[V]^G) = r = d_1 + d_2 + \dots + d_m \, .
\end{equation}
In other words, the minimum number of invariants needed to generate the invariant space
is given by the Krull dimension, $r$.
As such, we can always expand the Hilbert series as
\begin{equation}\label{eq:elem_expansion}
    H(K[V]^G,t) = \frac{\gamma}{(1-t)^r} + \frac{\tau}{(1-t)^{r-1}} + \dots \, ,
\end{equation}
where the significance of $\gamma$ and $\tau$ will be more clear later on.

\begin{remark}
In the 2HDM the Hilbert series has been fully computed \cite{Trautner:2018ipq},
both the ungraded (all token variables equal $t$) and the multi-graded (a token variable
for each representation). Excluding the three singlets of eq.~\eqref{eq:V_decomposition_2hdm},
the ungraded series is given by
\begin{equation}
    H(K[V]^G,t) = 1+4 t^2+4 t^3+15 t^4+18 t^5+53 t^6 + O\left(t^{7}\right) \, ,
\end{equation}
or in closed form
\begin{equation}\label{eq:hilbert_series_2hdm}
    H(K[V]^G,t) = \frac{1 + t^3 + 4t^4 + 2t^5 + 4t^6 + t^7 + t^{10}}
    {\left(1-t^2\right)^4 \left(1-t^3\right)^3 \left(1-t^4\right)} \, ,
\end{equation}
where we can read that $K[V]^G$ is generated by $4$ degree two, $3$ degree three and
$1$ degree four generators. Along with the three singlets, this yields a total of
$11$ physical parameters.
We can also expand it around $t=1$ such that
\begin{equation}
    H(K[V]^G,t) =  \frac{7/864}{(1-t)^{8}} + \frac{7/576}{(1-t)^{7}} + \dots \, .
\end{equation}
\end{remark}

\subsection{Molien series and the Weyl integration formula}

Until now we have just stated general properties of invariant rings and Hilbert series
but made no comment on its computation. For this, we separate two cases.

Let $G$ be a finite group and $\rho(g)$ a representation in $\mathrm{GL}_n$. Then
we may compute the Molien formula as
\begin{equation}\label{eq:molien_formula}
    H(K[V]^G,t) = \frac{1}{|G|} \sum_{g \in G} \frac{1}{\det (\mathds{1} - t \rho(g))} \, .
\end{equation}
We note that although we sum over the elements of the group, we need only to sum over
one element for each conjugacy class times the number of elements in it.

Similarly, one can compute the Hilbert series for an infinite group. Let $G$ be a reductive group,
e.g. $\mathrm{SU}(N)$, $\mathrm{SO}(N)$, $\mathrm{SL}(N)$. Then we define
the Weyl integration formula to be
\begin{equation}\label{eq:weyl_formula}
    H(K[V]^G,t) = \int_{G} \, d\mu_G  \frac{1}{\det (\mathds{1} - t \rho(g))} \, ,
\end{equation}
where $d\mu_G$ stand for the Haar measure. A number of them can be found in \cite{Hanany:2008sb}
for Lie groups, where it is defined as
\begin{equation}\label{eq:haar_measure}
    \int_{G} \, d\mu_G = \frac{1}{(2 \pi i)^m} \oint_{|z_1| = 1} \dots
    \oint_{|z_m| = 1} \frac{dz_1}{z_1}\dots \frac{dz_m}{z_m} \prod_{\alpha^+}
    \left( 1 - \prod_{l=1}^m z_l^{\alpha_l^+} \right) \, ,
\end{equation}
where $\alpha^+$ are the positive roots of the group.

Finally, we add the notion of plethystic exponential and plethystic logarithm, which as far
as we know was first introduced in \cite{getzler_kapranov_1998, labastida} and later for physical applications in \cite{Benvenuti:2006qr}.
The plethystic exponential
is defined as
\begin{equation}\label{eq:plethystic_exp}
    \mathrm{PE}[z_j,t,\mathbf{r}] := \exp \left( \sum_{k\geq 1} \frac{t^k \, \chi_\mathbf{r}(z_j^k )}{k}  \right) \, ,
\end{equation}
where $\chi_\mathbf{r}(z_j^k )$ is the character of the representation $\mathbf{r}=\rho$.
It can be interpreted with some trivial steps to be
\begin{equation}
    \frac{1}{\det (\mathds{1}-t\rho(g))}
    = \exp \left( \sum_{k \geq 1} \frac{t^k \Tr{\rho(g)^k} }{k} \right) \, .
\end{equation}
The plethystic logarithm is defined as
\begin{equation}\label{eq:plethystic_log}
    \mathrm{PL}\left[H(K[V]^G,t)\right] := \sum_{k \geq 1} \frac{\mu(k)}{k} \ln \left[ H(K[V]^G,t^k) \right] \, ,
\end{equation}
where $\mu(k)$ is the Möbius function. The significance of eq.~\eqref{eq:plethystic_log}
as a series is in the counting of possible primary invariants in the positive terms and
the determination of the syzygies in negative terms, and has been extensively discussed
in \cite{Benvenuti:2006qr}.

In the context of Lie groups we will always take the integration to be over the maximal
torus $\mathbb{T}$ of the group $G$. This will be the Abelian group which intersects all
conjugacy classes of $G$ and will greatly simplify our analysis.

\subsection{General properties}

Until now we have discussed several known results in invariant theory
as it applies to physics problems.
Here we
present a collection of formal results in invariant theory. These will be instrumental
to describe the class of models with more than one scalar. As with the more
formal sections of this text, we will give an example at the end to
guide the reader through the properties of the Hilbert series.

We have stated before that we will consider to be working
on $K= \mathbb{C}$, unless stated otherwise. 
We find that the characteristic of the field $\mathbb{C}$ is 
$\mathrm{char}(\mathbb{C})=0$.
In characteristic zero fields it suffices to say that for $K[V]^G$ to be finitely generated,
$G$ must be reductive. All semi-simple groups, finite groups and tori are examples, as described
in \cite{kemper}. Examples of semi-simple groups are $\mathrm{SL}(N)$, $\mathrm{SU}(N)$
and $\mathrm{O}(N)$.

\begin{theorem}\label{th:gorenstein}
If $G$ is semi-simple and connected, then $K[V]^G$ is Gorenstein \textup{\cite{Hochster}}. If
$K[V]^G$ is Gorenstein then
\begin{equation}\label{eq:gorenstein}
    H(K[V]^G,t^{-1}) = (-1)^r t^q H(K[V]^G,t) \, ,
\end{equation}
where $r$ is the Krull dimension and $q \in \mathbb{Z}$
as shown in \textup{\cite{Stanley1}}. 
\end{theorem}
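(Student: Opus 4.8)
The statement packages two distinct facts, and the plan is to treat them as two separate implications. The first, that $K[V]^G$ is Gorenstein whenever $G$ is semi-simple and connected, is the substantive part, and I would establish it through the canonical module; the second, the functional equation~\eqref{eq:gorenstein}, is comparatively formal once the Gorenstein property is in hand and follows from a duality for the Hilbert series.

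For the first implication, the plan proceeds in three steps. First, since $\mathrm{char}(\mathbb{C})=0$ and $G$ is semi-simple (hence linearly reductive), the Hochster--Roberts theorem guarantees that $R^G = K[V]^G$ is Cohen--Macaulay; this is the deep input and I would simply invoke it. Second, being a graded Cohen--Macaulay domain, $R^G$ possesses a canonical module $\omega_{R^G}$, and because $G$ is reductive the Reynolds operator makes the formation of invariants exact, so $\omega_{R^G}$ can be identified with the $G$-invariants of $\omega_R$. Since $R = K[V]$ is a polynomial ring, $\omega_R \cong R(-n)$ carries the $G$-action twisted by $\det \rho$ acting on $\wedge^{n} V$, where $n = \dim V$. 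Third---and this is the point at which semi-simplicity is used decisively---a connected semi-simple group has trivial abelianization and therefore admits no nontrivial rational characters, so $\det \rho$ is trivial. Consequently $\omega_{R^G} \cong R^G(-q)$ is free of rank one for some integer $q$, which is precisely the criterion for a Cohen--Macaulay graded ring to be Gorenstein.

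For the second implication I would use Stanley's duality for the Hilbert series of Cohen--Macaulay graded rings. Writing $r$ for the Krull dimension, the duality states that the canonical module satisfies $H(\omega_{R^G}, t) = (-1)^r H(R^G, t^{-1})$. On the other hand, the Gorenstein hypothesis gives $\omega_{R^G} \cong R^G(-q)$, so $H(\omega_{R^G}, t) = t^q H(R^G, t)$ directly from the grading shift. Equating the two expressions yields $(-1)^r H(R^G, t^{-1}) = t^q H(R^G, t)$, which is exactly eq.~\eqref{eq:gorenstein} after rearranging. The integer $q$ here is, up to sign, the $a$-invariant of the ring, and in the notation of eq.~\eqref{eq:general_hilbert_closed} it is read off from the numerator $P(t)$ together with the denominator degrees.

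I expect the genuine obstacle to be the first implication, and within it the identification of the canonical module of $R^G$ with the $\det\rho$-semi-invariants of $\omega_R$: this requires the exactness of the invariant functor and the compatibility of canonical modules under the inclusion $R^G \hookrightarrow R$. The Cohen--Macaulayness itself (Hochster--Roberts) I would not attempt to reprove. By contrast, the functional equation is a short formal consequence of Gorenstein-ness together with the Hilbert-series duality, so the essential work is entirely in showing that the determinant character is trivial, which is where connectedness and semi-simplicity of $G$ enter.
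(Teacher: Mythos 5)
The paper does not actually prove this statement: both halves are quoted from the literature (Hochster--Roberts for the ring-theoretic input, Stanley for the functional equation), so your proposal is being measured against the standard proofs rather than against an argument in the text. Your second implication is correct and is exactly Stanley's argument: for a Cohen--Macaulay graded ring of Krull dimension $r$ one has $H(\omega_{R^G},t)=(-1)^r H(R^G,t^{-1})$, and Gorenstein-ness gives $\omega_{R^G}\cong R^G(-q)$, hence $H(\omega_{R^G},t)=t^q H(R^G,t)$; equating the two yields eq.~\eqref{eq:gorenstein}. No complaints there.

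The gap is in the first implication, at the step where you identify $\omega_{R^G}$ with the $G$-invariants of $\omega_R$ (twisted by $\det\rho$) ``because the Reynolds operator makes the invariant functor exact.'' Exactness of $(-)^G$ does not imply that the canonical module commutes with taking invariants: $\omega_{R^G}$ is defined by duality over $R^G$, not as the value of a functor applied to $\omega_R$, and comparing the two is precisely the hard content of the theorem. The step fails for connected reductive groups where $(-)^G$ is equally exact. Take $G=\mathbb{C}^\times$ acting on $\mathbb{C}^2$ with weights $(2,-1)$: then $R^G=\mathbb{C}[xy^2]$ with generator in degree $3$, so $\omega_{R^G}\cong R^G(-3)$, whereas $(\omega_R)^G=(R\,dx\wedge dy)^G$ is the free $R^G$-module on $xy\,dx\wedge dy$, i.e.\ $R^G(-2)$. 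The discrepancy is caused by the divisor $\{y=0\}$, whose generic stabilizer $\mu_2$ acts nontrivially on the conormal direction --- the infinite-group analogue of a pseudo-reflection. The correct description of $\omega_{R^G}$ (Knop, \emph{Der kanonische Modul eines Invariantenrings}, J.\ Algebra 127 (1989)) carries a correction term involving the codimension-one ramification of $V\to V/\!\!/G$ and the characters of generic stabilizers, and the theorem for connected semisimple $G$ consists in showing these corrections vanish. Triviality of $\det\rho$, which is where you locate all the work, is necessary but not sufficient; the essential missing idea is the control of codimension-one ramification. (Incidentally, the Gorenstein claim for connected semisimple groups is really Knop's result; Hochster--Roberts only gives Cohen--Macaulayness.)
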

In other words, eq.~\eqref{eq:gorenstein}
also implies that the numerator of the Hilbert series should be palindromic.

\begin{theorem}
A theorem in \textup{\cite{Popov1}} states that for almost all representations of a connected,
semi-simple group
$G$ we have 
\begin{equation}\label{eq:q_is_n}
q = \dim V \, ,
\end{equation}
where $q$ is defined in
eq.~\eqref{eq:gorenstein} and $\dim V$ is the dimension of our initial space.
\end{theorem}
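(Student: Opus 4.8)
The functional equation \eqref{eq:gorenstein} of Theorem~\ref{th:gorenstein} already holds for connected semi-simple $G$ with \emph{some} integer $q$, so the task is only to identify it. Letting $t\to 0$ in \eqref{eq:gorenstein} and using $H(K[V]^G,0)=\dim R^G_0=1$ gives $H(K[V]^G,s)\sim(-1)^{r}s^{-q}$ as $s\to\infty$; thus $q$ is minus the order of the Hilbert series at infinity, i.e. the excess of the denominator degree over the numerator degree in \eqref{eq:general_hilbert_closed}. The plan is to extract this degree from the Weyl integration formula \eqref{eq:weyl_formula}.

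Write the eigenvalues of $\rho(g)$ on the maximal torus as $z^{w}$, with $w$ ranging over the weights of $V$, so that $\det(\mathds{1}-t\rho(g))=\prod_{w}(1-t\,z^{w})$. Two facts about a semi-simple group drive the computation. First, $G$ has no non-trivial characters, so $\det\rho(g)=1$ and the weights sum to zero, $\sum_{w}w=0$. Second, the Haar measure \eqref{eq:haar_measure} and the root factor $\prod_{\alpha^+}$ are invariant under $g\mapsto g^{-1}$ (equivalently $z_l\mapsto z_l^{-1}$). Using $1-t^{-1}z^{w}=-t^{-1}z^{w}(1-tz^{-w})$ factor by factor, together with $\sum_w w=0$, one obtains the clean identity
\begin{equation}
    \det\big(\mathds{1}-t^{-1}\rho(g)\big)=(-1)^{\dim V}\,t^{-\dim V}\,\det\big(\mathds{1}-t\rho(g^{-1})\big)\, .
\end{equation}
Substituting this into \eqref{eq:weyl_formula} and changing variables $g\mapsto g^{-1}$ then \emph{suggests} $H(K[V]^G,t^{-1})=(-1)^{\dim V}t^{\dim V}H(K[V]^G,t)$, i.e. the candidate value $q=\dim V$ of \eqref{eq:q_is_n}.

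The hard part — and the reason for the qualifier ``almost all representations'' — is that this last step is not a legitimate manipulation of the contour integral. Replacing $t$ by $t^{-1}$ moves the poles $z^{w}=t^{\pm1}$ of the integrand across the torus $|z_l|=1$, so the integral of the transformed integrand is \emph{not} $H(K[V]^G,t^{-1})$: it differs by the residues that cross the contour. The genuine exponent is therefore $q=\dim V$ reduced by a correction assembled from those crossing residues — equivalently the ramification/boundary contribution to the $a$-invariant of the quotient map $V\to V/\!/G$ — and this correction can be non-zero. For one or for two copies of the fundamental of $\mathrm{SU}(2)$, and for the adjoint representation, one computes $q<\dim V$, so the hypothesis is genuinely needed. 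The crux is to prove that, for a representation with finite generic stabiliser $G_v$ and a sufficiently small unstable locus (the precise genericity isolated in \cite{Popov1}), the crossing residues contribute nothing more singular than $t^{-\dim V}$ at infinity, so that the naive identity controls the leading behaviour and $q=\dim V$. Establishing this cancellation uniformly — and characterising exactly which representations are exceptional — is the main obstacle and is the real substance of the cited theorem.
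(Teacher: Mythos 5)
The paper does not actually prove this statement: it is imported verbatim as a citation to Popov, and the text immediately afterwards says only that the eventual computation of the Hilbert series will confirm it. So there is no in-paper argument to match yours against; you are attempting to prove the cited theorem itself, and judged on that basis your proposal has a genuine, self-acknowledged gap. The formal part is fine: identifying $q$ with minus the degree of $H$ at infinity, the factor-by-factor identity $\det(\mathds{1}-t^{-1}\rho(g))=(-1)^{\dim V}t^{-\dim V}\det(\mathds{1}-t\rho(g^{-1}))$ using $\sum_w w=0$ for semi-simple $G$, and the diagnosis that the naive substitution $t\mapsto t^{-1}$ in the Weyl integral is illegitimate because poles cross the torus $|z_l|=1$ (your counterexamples --- one or two fundamentals of $\mathrm{SU}(2)$, the adjoint --- are correct and do show the genericity hypothesis is necessary). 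But the entire content of the theorem is the step you defer: showing that for ``almost all'' representations the crossing contributions do not lower the order at infinity below $t^{-\dim V}$. You neither prove this nor make precise which representations are generic, so what you have is a heuristic for why $q=\dim V$ is the expected answer together with a correct explanation of why the heuristic can fail, not a proof that it usually does not fail.

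If you want to close the gap you would have to leave the contour-integral picture: Popov's actual argument characterises $-\deg H$ as the $a$-invariant (the degree of a generator of the canonical module of $K[V]^G$), and the equality $q=\dim V$ is obtained from properties of the quotient morphism $V\to V/\!/G$ --- roughly, triviality of the generic stabiliser together with a codimension condition on the non-principal locus --- rather than from a residue cancellation. Quantifying ``almost all'' requires exactly that algebro-geometric input, which is absent from your sketch. As written, the proposal should be presented as motivation for the cited theorem, not as a proof of it.
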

\noindent
We will always assume
that this is true. The computation of the Hilbert series will confirm it at the end.
\begin{remark2}
The degree of the Hilbert series is defined by \textup{\cite{Popov1}}
\begin{equation}\label{eq:degree_hil_1}
    \deg \left( H(K[V]^G,t) \right) = \deg \left( \frac{P(t)}{Q(t)} \right)
    = \deg \left( P(t) \right) - \deg \left( Q(t) \right) = -q \, ,
\end{equation}
with $q$ defined in eq.~\eqref{eq:gorenstein}. Thus, it follows that for almost all
representations of $G$ we have that
\begin{equation}\label{eq:degree_hil_2}
    \deg \left( H(K[V]^G,t) \right) = -q = - \dim V \, .
\end{equation}
\end{remark2}
\noindent
This is an important result which will enable us to know
how to find the correct form of the Hilbert series
at the end of the computation. 
\begin{theorem}
A theorem of Knop and Littelmann \textup{\cite{Knop}} confirms that for all 
representations of $G$ we have
\begin{equation}
    r \leq - \deg \left( H(K[V]^G,t) \right) \leq \dim V \, .
\end{equation}
\end{theorem}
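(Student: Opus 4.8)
The plan is to recast both inequalities as statements about the $a$-invariant $a(A) := \deg H(K[V]^G,t)$ of the invariant ring $A = K[V]^G$, graded by the ordinary polynomial degree inherited from $V$. Writing $H = P(t)/Q(t)$ as in eq.~\eqref{eq:general_hilbert_closed}, one has $-\deg H = \deg Q - \deg P = -a(A)$, so the claim $r \le -\deg H \le \dim V$ is equivalent to $r \le -a(A) \le \dim V$. The first structural input is that $A$ is Cohen--Macaulay: since $G$ is reductive and $\mathrm{char}\,K = 0$, this is the Hochster--Roberts theorem \cite{Hochster}. Cohen--Macaulayness lets me put $H$ in the Hironaka form $P(t)/\prod_{i=1}^r(1-t^{d_i})$ with $P(1)>0$ and $P$ having non-negative coefficients, where $d_1,\dots,d_r$ are the degrees of a homogeneous system of parameters (so $r$ is the Krull dimension). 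It also makes available graded local duality, $H(\omega_A,t) = (-1)^r H(A,t^{-1})$, whence $-a(A)$ equals the initial degree $\operatorname{indeg}\omega_A$, the least degree in which the canonical module $\omega_A$ is non-zero. The theorem thus reduces to the single sandwich $r \le \operatorname{indeg}\omega_A \le \dim V$.

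For the upper bound I would use Knop's description of the canonical module of an invariant ring \cite{Knop}, which realises $\omega_A$ as a graded module of relative invariants (semi-invariants for the determinant character of $\rho$) sitting inside $K[V]$. A non-zero relative invariant of least degree can be manufactured from the top exterior power $\wedge^{\dim V} V$ --- essentially a Jacobian/Vandermonde-type covariant --- and such an object has polynomial degree at most $\dim V$. This bounds $\operatorname{indeg}\omega_A \le \dim V$, which is exactly $-\deg H \le \dim V$. As a consistency check, in the semi-simple connected case the Gorenstein property of Theorem~\ref{th:gorenstein} forces the numerator $P$ to be palindromic and, via eq.~\eqref{eq:degree_hil_1} with $q = \dim V$, gives the sharp value $-\deg H = \dim V$; the content of the upper bound is that the inequality persists for every representation, even when $A$ fails to be Gorenstein.

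For the lower bound I would show $\operatorname{indeg}\omega_A \ge r$, equivalently $\deg P \le \sum_{i}(d_i-1)$: the top-degree secondary invariant of the Hironaka decomposition cannot exceed the socle degree forced by the inherited grading. In the semi-simple connected setting this is immediate, since the Gorenstein identity already pins $-\deg H = \dim V \ge r$ (recall $r = \dim V/\!\!/G \le \dim V$), and the general reductive case again follows from the shape of $\omega_A$ provided by Knop. I expect the main obstacle to be precisely this canonical-module computation: establishing the correct module of relative invariants and, crucially, its grading shift, on which both the sharp initial degree and the role of the $V$-inherited grading depend. That the grading is essential is already visible for the $\mathbb{Z}_2$-quotient $K[u,v]^{\{\pm 1\}}$, whose Hilbert series $(1+t^2)/(1-t^2)^2$ meets both bounds with equality, whereas the same ring in its standard grading as $K[x,y,z]/(xy-z^2)$ violates $r \le -\deg H$; any correct proof must use the grading coming from $V$.
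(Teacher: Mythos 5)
First, a point of order: the paper does not prove this statement at all --- it is imported verbatim as a theorem of Knop and Littelmann from \cite{Knop}, so there is no ``paper proof'' to compare against. Judged on its own terms, your proposal correctly identifies the machinery that the actual literature proof runs on: Hochster--Roberts Cohen--Macaulayness \cite{Hochster}, the Hironaka form of the Hilbert series, graded local duality, and the reduction of both inequalities to the single statement $r \leq \operatorname{indeg}\omega_A \leq \dim V$ for the canonical module $\omega_A$ of $A = K[V]^G$. Your closing example ($K[u,v]^{\mathbb{Z}_2}$ versus $K[x,y,z]/(xy-z^2)$ in the standard grading) is also exactly the right sanity check that the inherited grading is essential to the lower bound.

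The gap is that, after this (correct) reduction, both bounds are discharged by appeal to ``Knop's description of the canonical module'' --- which is precisely the content of the theorem being proved, so the argument is circular as a self-contained proof. Moreover, the one concrete construction you offer for the upper bound does not work as stated: a $G$-equivariant copy of $\det V^{*}$ inside $\mathrm{Sym}^{\dim V}(V^{*})$ need not exist, i.e.\ the isotypic component $\bigl(\mathrm{Sym}^{\dim V}V^{*}\otimes\wedge^{\dim V}V\bigr)^{G}$ can vanish (take $G=\mathrm{GL}(V)$ acting on $V$: $\mathrm{Sym}^{n}V^{*}$ is irreducible and is not $\det V^{*}$ for $n\geq 2$, yet the inequality $0\leq 0\leq n$ holds trivially there). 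So the least degree of a nonzero relative invariant is not controlled by a Jacobian/Vandermonde covariant in the way you suggest; Knop's actual argument identifies $\omega_A$ with a specific module of covariants together with its grading shift and then bounds its initial degree, and that identification --- which you flag yourself as ``the main obstacle'' --- is the entire theorem. As a roadmap to \cite{Knop} your write-up is accurate and useful; as a proof it leaves both inequalities unestablished.
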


\noindent
In \cite{Popov1} another important corollary follows. 
\begin{corollary}
If $G$ is a semi-simple, connected
group $G$, then for almost all representations we have
\begin{equation}\label{eq:dimg_tau_gamma}
    \dim G = \frac{2 \tau}{\gamma} \, ,
\end{equation}
where $\gamma$ and $\tau$ are defined in eq.~\eqref{eq:elem_expansion}.
\end{corollary}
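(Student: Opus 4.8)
The plan is to extract the relation $\dim G = 2\tau/\gamma$ purely from the Gorenstein functional equation of Theorem~\ref{th:gorenstein} together with Popov's identification $q = \dim V$, by matching the two leading terms of the Laurent expansion of $H$ at $t=1$. Since $K[V]^G$ is Gorenstein, we may invoke the palindromy eq.~\eqref{eq:gorenstein}, namely $H(K[V]^G,t^{-1}) = (-1)^r t^q H(K[V]^G,t)$. By eq.~\eqref{eq:lim_hil_krull} the pole of $H$ at $t=1$ has order exactly $r$, so the expansion $H = \gamma(1-t)^{-r} + \tau(1-t)^{-(r-1)} + \cdots$ of eq.~\eqref{eq:elem_expansion} is precisely its principal Laurent part there, and $\gamma \neq 0$ legitimises dividing by it at the end.

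First I would substitute $t \to t^{-1}$ into this expansion and re-expand around $t=1$. Using $1-t^{-1} = -(1-t)/t$, each term obeys $(1-t^{-1})^{-(r-j)} = (-1)^{r-j}(1-t)^{-(r-j)}\, t^{\,r-j}$, and expanding $t^{\,r-j} = 1 - (r-j)(1-t) + \cdots$ produces both the leading pole and a subleading correction. Collecting, the coefficient of $(1-t)^{-r}$ in $H(t^{-1})$ is $(-1)^r\gamma$ and the coefficient of $(1-t)^{-(r-1)}$ is $(-1)^r(-r\gamma - \tau)$. I would then expand the right-hand side $(-1)^r t^q H(t)$ the same way: writing $t^q = 1 - q(1-t) + \cdots$ and multiplying into $H(t)$ gives leading coefficient $(-1)^r\gamma$, which matches and confirms the pole order is preserved, and subleading coefficient $(-1)^r(\tau - q\gamma)$. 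Equating the two subleading coefficients yields $-r\gamma - \tau = \tau - q\gamma$, that is $(q-r)\gamma = 2\tau$, hence $q - r = 2\tau/\gamma$. Substituting Popov's result $q = \dim V$ of eq.~\eqref{eq:q_is_n} gives $\dim V - r = 2\tau/\gamma$.

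It remains to identify $\dim V - r$ with $\dim G$, and this is the step I expect to carry the real content rather than mere bookkeeping. For a reductive $G$ acting on $V$, the Krull dimension $r$ of $K[V]^G$ equals the transcendence degree of the invariant field, which is $\dim V$ minus the dimension of a generic orbit. For almost all representations the generic stabiliser is finite, so a generic orbit has dimension $\dim G$ and therefore $r = \dim V - \dim G$. This genericity hypothesis is exactly the same ``almost all representations'' clause already invoked for $q = \dim V$, and it is the place where the corollary can fail, for instance when the action admits a positive-dimensional generic stabiliser. With this identification in hand, $\dim G = \dim V - r = 2\tau/\gamma$, completing the argument.
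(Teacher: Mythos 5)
Your argument is correct, but note that the paper does not actually prove this corollary: it is quoted from Popov's work \cite{Popov1} as a known result, so there is no in-paper proof to compare against. What you have done is reconstruct the derivation, and it is a sound one: the Laurent-coefficient matching at $t=1$ is carried out correctly (substituting $1-t^{-1}=-(1-t)/t$ into eq.~\eqref{eq:elem_expansion}, the two leading coefficients of $H(t^{-1})$ are $(-1)^r\gamma$ and $(-1)^r(-r\gamma-\tau)$, while those of $(-1)^r t^q H(t)$ are $(-1)^r\gamma$ and $(-1)^r(\tau-q\gamma)$, giving $(q-r)\gamma=2\tau$), and combining this with $q=\dim V$ from eq.~\eqref{eq:q_is_n} and the generic-orbit formula $r=\dim V-\dim G$ yields the claim. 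The one point worth flagging is a potential circularity: the identity $r=\dim V-\dim G$ is exactly what the paper later presents, in eq.~\eqref{eq:krull_almost}, as a \emph{consequence} of this corollary together with $q=\dim V$. You avoid the circle by justifying $r=\dim V-\dim G$ independently through the transcendence-degree/generic-orbit argument with finite generic stabiliser (the same Rosenlicht-type input the paper invokes in the proof of Theorem~\ref{th:1}), and you correctly observe that the finiteness of the generic stabiliser is precisely where the ``almost all representations'' caveat enters. What your route buys is transparency: it shows the corollary is not an independent fact but is forced by the Gorenstein functional equation of Theorem~\ref{th:gorenstein}, Popov's $q=\dim V$, and the orbit-dimension count, and it makes clear that the same genericity hypothesis underlies all three of the paper's ``for almost all representations'' statements.
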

\noindent
The most important result we present here is given in \cite{Popov2}.
\begin{remark2}
If $G$ is semi-simple and
connected,
the Krull dimension $r$ is given by
\begin{equation}\label{eq:krull_almost}
    r = q - \frac{2 \tau}{\gamma} \longrightarrow \dim V - \dim G \, ,
\end{equation}
where the arrow means ``for almost all representations of $G$",
in accordance with eq.~\eqref{eq:q_is_n}.
\end{remark2}

Finally we can state a non-uniqueness property of the Hilbert series. Let $H_1(K[V]^G,t)$ be a Hilbert series
respecting the properties in 
eqs.~\eqref{eq:gorenstein}--\eqref{eq:degree_hil_2} and eqs.~\eqref{eq:dimg_tau_gamma}--\eqref{eq:krull_almost}.
Then there may exist $H_2(K[V]^G,t)$ with the same properties
such that
\begin{equation}
    H_1(K[V]^G,t) - H_2(K[V]^G,t) = 0 \, .
\end{equation}
A more general result
is discussed in \cite{kemper2} along with an algorithm to search for
an optimal solution which is often,
but not always,
the minimal solution. We will always search for the minimal solution, i.e. the one where the degrees
of the
Hilbert series are minimal.
An example may be provided in the 2HDM where
\begin{equation}\label{eq:nonuni_1}
    H_1(K[V]^G,t) = \frac{1 + t^3 + 4t^4 + 2t^5 + 4t^6 + t^7 + t^{10}}
    {\left(1-t^2\right)^4 \left(1-t^3\right)^3 \left(1-t^4\right)} \, ,
\end{equation}
and an alternative non-minimal Hilbert series,
\begin{equation}\label{eq:nonuni_2}
    H_2(K[V]^G,t) = \frac{1 + t^2 + t^3 + 4 t^4 + 3 t^5 + 8 t^6 + 3 t^7 + 
 4 t^8 + t^9 + t^{10} + t^{12}}
    {\left(1-t^2\right)^3 \left(1-t^3\right)^3 \left(1-t^4\right)^2} \, .
\end{equation}
We can readily check that the non-minimal solution uses only three degree two primary
invariants and two degree four primary invariants.
Thus, eqs.~\eqref{eq:nonuni_1}--\eqref{eq:nonuni_2} are an example
of the non-uniqueness of a Hilbert series.

\begin{remark}
Let $H(K[V]^G, t)$ be the Hilbert series of the 2HDM, written in eq.~\eqref{eq:nonuni_1}.
Theorem~\ref{th:gorenstein} states that its numerator is palindromic, which is true. 
While all of Popov's results are true for all but a finite number of representations, the 2HDM
is one of them, i.e. the representation $2(\mathbf{3}) \, \oplus \, \mathbf{5}$
is one of the
``almost all representations".
Then,
\begin{align}
    &q = 11 = \dim V \, , \\[2mm]
    &\deg \left( H(K[V]^G,t) \right) = - 11 = -\dim V \, ,  \\[2mm]
    &\dim G = \frac{2 \tau}{\gamma} = \frac{2\times 864}{576} = 3 \, , \\[2mm]
    & r = \dim K[V]^G = \dim V - \dim G = 11 - 3 = 8 \, .
\end{align}
With eq.~\eqref{eq:nonuni_2}, the exact same results can be extracted, although it
is not a minimal solution.
\end{remark}

\newpage

\section{Computing the Hilbert series}

While the computation of the Hilbert series by the Molien formula for finite groups enjoys
a large amount of software and information, the computation for infinite groups is much less
straightforward.

Calculating eq.~\eqref{eq:weyl_formula} is usually achieved by the use of the plethystic
exponential as the integrand using the characters of the representations of $G$. In
\cite{Lehman:2015via} there is a collection of such character functions in the appendix.
Ungraded and multi-graded Hilbert series are then computed with the residue
theorem and the fact that
\begin{equation}
    \mathrm{PE}[z_j,t,\mathbf{r}_1 \oplus \mathbf{r}_2] =
    \mathrm{PE}[z_j,t,\mathbf{r}_1] \times \mathrm{PE}[z_j,t,\mathbf{r}_2] \, ,
\end{equation}
as we know that $\Tr \left( a \oplus b \right) = \Tr(a) + \Tr(b)$.
Nevertheless, the use of the residue theorem has a stark impact on the complexity of the computation.
For large representations or multivariate integrations the closed form of the Hilbert
series may take too long to compute, too much memory or even be impossible with current
technology. This problem has since prevented the use of invariant theory in physics
for more complicated problems.

\subsection{Omega calculus}

Combinatorics has been a constant intersection with invariant theory.
In more recent years,
the same field has been essential for the computation of invariants.

In the work of Percy MacMahon \cite{Macmahon} the author illustrates partition analysis by solving combinatorics problems.
Suppose we want to find all non-negative integer solutions to $3a - 2b + c = 0$. Then, the generating function
will be an Elliott-rational function, a rational function which can be written as products in the denominator
of the type $A-B$, where $A$ and $B$ are monomials in the variables. Then it is characterized by
\begin{equation}
    \sum_{\substack{3a - 2b + c = 0 \\ a,b,c \geq 0}} t_1^a t_2^b t_3^c \, .
\end{equation}
Next, we introduce a new variable $\lambda$ and use an operator $\underset{=}{\Omega}$ to force the constant term
of the series, such that
\begin{equation}\label{eq:dio_1}
    \sum_{\substack{3a - 2b + c = 0 \\ a,b,c \geq 0}} t_1^a t_2^b t_3^c = 
    \underset{=}{\Omega} \sum_{a,b,c \geq 0} \lambda^{3a-2b+c} t_1^a t_2^b t_3^c \, .
\end{equation}
It can be shown that both the function before applying the operator and the one after are Elliott-rational functions
\cite{Macmahon}. 
In the $3a - 2b + c = 0$ example eq.~\eqref{eq:dio_1}
may be written in closed form as
\begin{equation}
    \frac{1 + t_1 \, t_2^2 \, t_3}{(1 - t_1^2 \, t_2^3)(1 - t_2 \, t_3^2)}
    =
    \underset{=}{\Omega} \, 
    \frac{1}{(1-\lambda^3 \, t_1)(1-\lambda^{-2} \, t_2)(1-\lambda \, t_3)} \, .
\end{equation}
The solution can be expanded in $1 + t_2 \, t_3^2 + t_1 \, t_2^2 \, t_3 +  \cdots$,
all of which are solutions to the Diophantine equation, e.g.
$3\times 0 - 2 \times 1 + 2 = 0$ and $3\times 1 - 2 \times 2 + 1 = 0$.

In general one defines the operator $\underset{=}{\Omega}$ as in \cite{Macmahon}
\begin{equation}\label{eq:omega_equal}
    \underset{=}{\Omega} \sum_{j_1=-\infty}^\infty  \dots \sum_{j_m=-\infty}^\infty a_{j_1, \dots , j_m} \lambda_1 \dots
    \lambda_m := a_{0, \dots , 0} \, ,
\end{equation}
where the variables $\lambda_i$ are restricted to the neighbourhood of $|\lambda_i|=1$. The computation of
such operation has been extensively covered by \cite{Andrews} which culminated with the development of the Omega
package for Mathematica and later, by Guoce Xin, for Maple \cite{Xin, XinP}. Besides the difference in platform,
Guoce Xin's software uses a faster algorithm based on a different approach detailed in his paper.

The fast algorithm in \cite{Xin} is a very powerful tool for computing Hilbert series with the Weyl formula.
It contrasts with the residue theorem as a faster and less resource hungry method and it is based on the following.
Let $G$ be a semi-simple group with a maximal torus $\mathbb{T}$ with an action on $V$ given by 
$\mathrm{diag} [m_1(z), \dots , m_n(z) ]$ where $m(z)$ are Laurent monomials
\cite{kemper} in $z_1, \dots, z_m$. Then
the character is given by 
\begin{equation}
    \chi_\mathbf{r} = \sum_{i=1}^n m_i(z) \, ,
\end{equation}
with $n=\dim V$.
Let us define with eq.~\eqref{eq:weyl_formula} and eq.~\eqref{eq:haar_measure} the Hilbert series
\begin{equation}
    H(K[V]^G,t) = \frac{1}{(2 \pi i)^m} \oint_{|z_1| = 1} \dots
    \oint_{|z_m| = 1} \frac{dz_1}{z_1}\dots \frac{dz_m}{z_m} 
    \frac{\prod_{\alpha^+}
    \left( 1 - \prod_{l=1}^m z_l^{\alpha_l^+} \right)}{(1-m_1(z) t) \dots (1-m_n(z) t)} \, .
\end{equation}
It then follows \cite{kemper} that the Hilbert series $H(K[V]^G,t)$ is the coefficient of $1$ as series
in $z_1, \dots, z_n$ of
\begin{equation}
    \frac{\prod_{\alpha^+}
    \left( 1 - \prod_{l=1}^m z_l^{\alpha_l^+} \right)}{(1-m_1(z) t) \dots (1-m_n(z) t)} \, .
\end{equation}
Thus, using Omega calculus, one can use eq.~\eqref{eq:omega_equal} to write the
important equality
\begin{equation}\label{eq:omega_hil}
    \boxed{H(K[V]^G,t) = \underset{=}{\Omega} \, \left[\frac{\prod_{\alpha^+}
    \left( 1 - \prod_{l=1}^m z_l^{\alpha_l^+} \right)}{(1-m_1(z) t) \dots (1-m_n(z) t)} \right]} 
    \, ,
\end{equation}
where instead of $\lambda_i$ we have $z_i$. We note that the assumption that $|z_i|=1$ is a straightforward assumption
in the Weyl formula. It is noteworthy that this algorithm will always work in a function of the type
$F(z_1,\dots, z_m) \in K[z_1,\dots, z_m,z_1^{-1},\dots, z_m^{-1}]$, i.e. it is described by powers of its variables and
reciprocals alone, which is a common property in many physical applications.

The speed of Xin's algorithm is owed to partial fraction decomposition and with it
we are able to remove entire rational functions which do not contribute to the final answer. This idea is
first attributed to Richard P. Stanley in \cite{Stanley2}.

\subsection{Pratical computation in Maple}\label{subsec:prac_maple}

The use of Xin's algorithm is fairly straightforward. It was used in \cite{Luque} with Ell.mpl 
shortly after its introduction and
in \cite{Xin2} where a version of eq.~\eqref{eq:omega_hil} is 
introduced and the package Ell2.mpl is used.
The package is rather straightforward with the main function
of our interest to be the command
\textit{E\_OeqW(f, v, ve)}, where \textit{f} is the integrand,
\textit{v} and \textit{ve} are all the variables and the variables
to integrate, respectively.

\begin{remark}
We will compute the Hilbert series for the 2HDM in \cite{Trautner:2018ipq}
starting with eq.~(5.3). Thus,
\begin{equation}
    H(K[V]^G,q,y,t) = \frac{1}{2 \pi i} \oint_{|z|=1} \frac{dz}{z} (1-z^2) \mathrm{PE}[z,q,\mathbf{5}]
    \mathrm{PE}[z,y,\mathbf{3}] \mathrm{PE}[z,t,\mathbf{3}] \, ,
\end{equation}
where 
\begin{align}
    &\mathrm{PE}[z,q,\mathbf{5}] \,
    \mathrm{PE}[z,y,\mathbf{3}] \, \mathrm{PE}[z,t,\mathbf{3}] = \nonumber \\[5mm]
    =&
    \frac{1}{(1-t)(1-\frac{t}{z^2})(1-t z^2)(1-y)(1-\frac{y}{z^2})(1-y z^2)
    (1-q)(1-\frac{q}{z^2})(1-q z^2)(1-\frac{q}{z^4})(1-q z^4)} \, ,
\end{align}
and $(1-z^2)=\prod_{\alpha^+} \left( 1 -  z^{\alpha^+} \right)$. Then, using
eq.~\eqref{eq:omega_hil} we have
\begin{equation}
    H(K[V]^G,t) = \underset{=}{\Omega} \, \left[
    (1-z^2)  \mathrm{PE}[z,q,\mathbf{5}] \,
    \mathrm{PE}[z,y,\mathbf{3}] \, \mathrm{PE}[z,t,\mathbf{3}] \right] \, ,
\end{equation}
where $z$ is the variable to eliminate and $q$, $y$ and $t$ are the
remaining variables. In Maple we write:

\begin{lstlisting}[language={},caption={}]
restart:

read("/path/to/Ell2.mpl")

integrand:=(q,y,t,z) -> write_our_eq.(4.11) * (1-z^2)

f := E_OeqW(integrand(q,y,t,z), [q,y,t,z], [z]):

g := normal(f)

\end{lstlisting}
\noindent
By simply multiplying the numerator
and denominator by $(1-q^2 y^2)(1-q^2 t^2)/[(1-q y)(1-q t)]$ it becomes clear
that we have successfully reproduced eq.~(5.4) of \cite{Trautner:2018ipq}.
\end{remark}

\begin{remark}
The ungraded Hilbert series in eq.~\eqref{eq:hilbert_series_2hdm}. While it is clear
that we may just do $q=y=t$ in our first example, we want to demonstrate the method when there are
powers in the denominator. Hence, we start with
\begin{align}
    \mathrm{PE}[z,t,\mathbf{5}] \,
    \mathrm{PE}[z,t,\mathbf{3}] \, \mathrm{PE}[z,t,\mathbf{3}] = 
    \frac{1}{(1-t)^3(1-\frac{t}{z^2})^3(1-t z^2)^3
    (1-\frac{t}{z^4})(1-t z^4)} \, ,
\end{align}
and thus
\begin{align}
    H(K[V]^G,t) &= \frac{1}{2 \pi i} \oint_{|z|=1} \frac{dz}{z} 
    \frac{(1-z^2)}{(1-t)^3(1-\frac{t}{z^2})^3(1-t z^2)^3
    (1-\frac{t}{z^4})(1-t z^4)} \nonumber \\[5mm]
    &= \underset{=}{\Omega} \, \left[
    \frac{(1-z^2)}{(1-t)^3(1-\frac{t}{z^2})^3(1-t z^2)^3
    (1-\frac{t}{z^4})(1-t z^4)} \right] \, .
\end{align}
Then we use Maple and write:
\begin{lstlisting}[language={},caption={}]
restart:

read("/path/to/Ell2.mpl")

integrand:=(t,z) -> (1-z^2)/((1-t)^3(1-t/z^2)^3(1-t*z^2)^3
    (1-t/z^4)(1-t*z^4))

f := E_OeqW(integrand(t,z), [t,z], [z]):

g := normal(f)

\end{lstlisting}
The output will be
\begin{equation}
    H(K[V]^G,t) = \frac{1-t^2 +t^3 +5 t^4 + t^5 -t^6 + t^8}
    {(1-t)^3(1-t^2)^3 (1-t^3)^2 (1+t)^2 (1+t+t^2) } \, ,
\end{equation}
which after some algebra we can write as
\begin{equation}
    H(K[V]^G,t) = \frac{1+t^3 +4t^4 +2 t^5 + 4t^6 + t^7 + t^{10}}
    {(1-t^2)^4(1-t^3)^3 (1-t^4) } \, ,
\end{equation}
in agreement with eq.~\eqref{eq:hilbert_series_2hdm}.
\end{remark}

\section{The 3HDM}

The case of the full characterization and counting of invariants in the 3HDM is still an open problem.
It is clear that the computation of their properties mirrors the one of the 2HDM, albeit the fact that
it is much more complicated. We will present here
for the first time the full computation of the Hilbert series of the 3HDM,
in both expanded and closed form.

\subsection{Definition of the Hilbert series}

Following our previous results in decomposing $V$ we may quickly
arrive at
the relevant decomposition of the 3HDM.
The decomposition of $z_{ij,kl}$ is given by
\begin{align}
    z_{ij,kl} &\rightarrow
    \left[ \mathrm{Sym}^2 (\mathbf{3}) \,
    \otimes \, \mathrm{Sym}^2 (\bar{\mathbf{3}} ) \right] \, \oplus \, 
    \left[ \mathrm{Alt}^2 (\mathbf{3}) \, \otimes \, \mathrm{Alt}^2 (\bar{\mathbf{3}} )  \right]
    \nonumber \\[2mm]
    &= \left[ \mathbf{6} \, \otimes \, \bar{\mathbf{6}} \right]
    \, \oplus \, \left[ \bar{\mathbf{3}} \, \otimes \, \mathbf{3} \right] \nonumber \\[2mm]
    &= \left[ \mathbf{1} \, \oplus \, \mathbf{8} \, \oplus \, \mathbf{27} \right]
    \, \oplus \, \left[ \mathbf{1} \, \oplus \, \mathbf{8} \right] \nonumber \\[2mm]
    &=  \mathbf{1} \, \oplus \, \mathbf{1} \, \oplus \, \mathbf{8} \oplus \, \mathbf{8} 
    \oplus \, \mathbf{27} \, .
\end{align}
The decomposition of $\mu_{ij}$ is straightforward and hence,
\begin{align}\label{eq:decomp_3hdm}
    & \mu_{ij} \rightarrow \mathbf{1} \, \oplus \, \mathbf{8} \, , \nonumber \\[5mm]
    & z_{ij,kl} \rightarrow \mathbf{1} \, \oplus \, \mathbf{1} \, \oplus \, \mathbf{8}
    \oplus \, \mathbf{8} 
    \oplus \, \mathbf{27} \, ,
\end{align}
a result that we compute differently in appendix~\ref{sec:appA}.
From here, we can already define the Hilbert series as in
eq.~\eqref{eq:weyl_formula}
with the plethystic exponentials. Thus,
the multigraded Hilbert series we are interested in is defined as
\begin{align}
    H(K[V]^G,s,t,u,q) = & \frac{1}{(2 \pi i)^2} \oint_{|z_1|=1} \frac{d z_1}{z_1}
    \oint_{|z_2|=1} \frac{d z_2}{z_2} (1-z_1 z_2)\left( 1 - \frac{z_1^2}{z_2} \right)
    \left( 1 - \frac{z_2^2}{z_1} \right) \times \nonumber \\[5mm]
    & \mathrm{PE}[z_1,z_2,s,\mathbf{8}] \, \mathrm{PE}[z_1,z_2,t,\mathbf{8}] \,
     \mathrm{PE}[z_1,z_2,u,\mathbf{8}] \, \mathrm{PE}[z_1,z_2,q,\mathbf{27}] \, ,
\end{align}
where the token variables are $s$, $t$, $u$ for the
three adjoint representations and $q$ for the $\mathbf{27}$. 
From eq.~\eqref{eq:plethystic_exp} we can compute the plethystic exponentials of the $\mathbf{8}$'s
and the $\mathbf{27}$.
The plethystic exponential depends only on the character polynomials that we construct with the 
weight system of the irreducible representations. With LieART \cite{Feger:2019tvk} it is straightforward
to compute
\begin{align}
    \chi_{\mathbf{8}}(z_1,z_2) = z_1 z_2 + \frac{z_2^2}{z_1} + \frac{z_1^2}{z_2} + 2 +
    \frac{z_2}{z_1^2} + \frac{z_1}{z_2^2} + \frac{1}{z_1 z_2} \, ,
\end{align}
and
\begin{align}
    \chi_\mathbf{27}(z_1,z_2) = & \frac{z_1^4}{z_2^2} + \frac{z_2^2}{z_1^4} + \frac{z_1^3}{z_2^3}
    + \frac{z_2^3}{z_1^3} + z_1^3 + \frac{1}{z_1^3} + \frac{z_1^2}{z_2^4} + \frac{z_2^4}{z_1^2}
    + z_1^2 z_2^2 + \frac{1}{z_1^2 z_2^2} \nonumber \\[5mm]
    &+ \frac{2z_1^2}{z_2}  + \frac{2z_2}{z_1^2} + \frac{2z_1}{z_2^2}
    + \frac{2z_2^2}{z_1} + 2 z_1 z_2 + \frac{2}{z_1 z_2} + z_2^3 + \frac{1}{z_2^3} + 3 \, .
\end{align}
Then, through eq.~\eqref{eq:plethystic_exp} we have
\begin{align}\label{eq:ple8}
    \mathrm{PE}[z_1,z_2,s,\mathbf{8}] = & \left[
    (1-s)^2 \left(1-s \frac{z_1^2}{z_2}\right) \left(1-s \frac{z_2}{z_1^2}\right)
    \left(1-s \frac{1}{z_1 z_2}\right) \times \right.
    \nonumber \\[2mm]
    & \left. \left(1-s z_1 z_2\right) \left(1-s \frac{z_1}{z_2^2}\right) \left(1-s \frac{z_2^2}{z_1}\right) \right]^{-1}
    \, ,
\end{align}
and
\begin{align}\label{eq:ple27}
    \mathrm{PE}[z_1,z_2,q,\mathbf{27}] = & \left[
    (1-q)^3 \left(1-q \frac{1}{z_2^3} \right) \left(1-q z_2^3\right) \left(1-q \frac{1}{z_1 z_2} \right)^2 
    \left(1-q z_1 z_2 \right)^2 \left(1-q \frac{z_2^2}{z_1} \right)^2 \times \right. \nonumber \\[2mm]
    &  \left(1-q \frac{z_1}{z_2^2}\right)^2 \left(1-q \frac{z_2}{z_1^2} \right)^2 \left(1- q \frac{z_1^2}{z_2} \right)^2 
    \left(1- q \frac{1}{z_1^2 z_2^2} \right) \left(1 - q z_1^2 z_2^2 \right)  \times \nonumber \\[2mm]
    &  \left(1-q \frac{z_2^4}{z_1^2} \right) \left(1-q \frac{z_1^2}{z_2^4} \right) 
    \left(1-q \frac{1}{z_1^3} \right) \left(1-q z_1^3\right)
    \left(1-q \frac{z_2^3}{z_1^3} \right)  \times \nonumber \\[2mm]
    & \left.  \left(1-q \frac{z_1^3}{z_2^3} \right) \left(1-q \frac{z_2^2}{z_1^4} \right) 
    \left(1-q \frac{z_1^4}{z_2^2} \right) \right]^{-1} \, ,
\end{align}
which we already recognize as Elliott-rational functions.
It may be inferred by examining eqs.~\eqref{eq:ple8} and \eqref{eq:ple27} that
the computation of this particular integral is difficult due to the existence
of higher order poles, cubic and quartic polynomials as well as the plain fact that we are dealing with
multivariate residues. These are known to be specially difficult to handle.

\subsection{Expansions and plethystic logarithm}

An alternative to the direct computation of the Hilbert series is an expansion as a formal
series in the token variables. As it turns out this expansion is
well-behaved and easy to compute. With it we may apply the residue theorem
for $z_1, \, z_2 \rightarrow 0$ after truncating the series. Then,
\begin{align}
    H(K[V]^G,s,t,u,q) = & 1+q^2+u^2+t u+t^2+s u+s t+s^2+2 q^3+q^2 u+q u^2+u^3+q^2 t \nonumber \\
    & +q t u+t u^2+q t^2+t^2 u+t^3+q^2 s+q s u+s u^2+q s t+2 s t u+s t^2 \nonumber \\
    & +q s^2+s^2 u+s^2 t+s^3+4 q^4+2 q^3 u+4 q^2 u^2+q u^3+u^4+2 q^3 t+5 q^2 t u \nonumber \\
    & +3 q t u^2+t u^3+4 q^2 t^2+3 q t^2 u+3 t^2 u^2+q t^3+t^3 u+t^4+2 q^3 s+5 q^2 s u \nonumber \\
    & +3 q s u^2+s u^3+5 q^2 s t+6 q s t u+4 s t u^2+3 q s t^2+4 s t^2 u+s t^3+4 q^2 s^2 \nonumber \\
    & +3 q s^2 u+3 s^2 u^2+3 q s^2 t+4 s^2 t u+3 s^2 t^2+q s^3+s^3 u+s^3 t+s^4+6 q^5 \nonumber \\
    & +8 q^4 u+11 q^3 u^2+5 q^2 u^3+2 q u^4+u^5+8 q^4 t+17 q^3 t u+14 q^2 t u^2+6 q t u^3 \nonumber \\
    & +2 t u^4+11 q^3 t^2+14 q^2 t^2 u+10 q t^2 u^2+3 t^2 u^3+5 q^2 t^3+6 q t^3 u+3 t^3 u^2+2 q t^4 \nonumber \\
    & +2 t^4 u+t^5+8 q^4 s+17 q^3 s u+14 q^2 s u^2+6 q s u^3+2 s u^4+17 q^3 s t+27 q^2 s t u \nonumber \\
    & +17 q s t u^2+6 s t u^3+14 q^2 s t^2+17 q s t^2 u+8 s t^2 u^2+6 q s t^3+6 s t^3 u+2 s t^4 \nonumber \\
    & +11 q^3 s^2+14 q^2 s^2 u+10 q s^2 u^2+3 s^2 u^3+14 q^2 s^2 t+17 q s^2 t u+8 s^2 t u^2 \nonumber \\
    & +10 q s^2 t^2+8 s^2 t^2 u+3 s^2 t^3+5 q^2 s^3+6 q s^3 u+3 s^3 u^2+6 q s^3 t+6 s^3 t u \nonumber \\
    & +3 s^3 t^2+2 q s^4+2 s^4 u+2 s^4 t+s^5 + \mathcal{O}\left( \left[ stuq \right]^6 \right) \, .
\end{align}
The ungraded Hilbert series is then given by equaling $t=s=u=q$,
\begin{align}
    H(K[V]^G,t) =& 1 + 7 \, t^2 + 22 \, t^3 + 94 \, t^4 + 438 \, t^5 + 1971 \, t^6 + 8376 \, t^7 
    + 34973 \, t^8 + 138426 \, t^9 \nonumber \\[2mm] 
    &+ 525486 \, t^{10} + 1912602 \, t^{11} 
    + 6685563 \, t^{12}+ 22488737 \, t^{13} + 72974065 \, t^{14} \nonumber \\[2mm] 
    &+ 228829031 \, t^{15} + 694812413 \, t^{16} + 2046440237 \, t^{17} + 5856320772 \, t^{18} \nonumber \\[2mm] 
    &+ 16308266932 \, t^{19} + 44255437022 \, t^{20} + \mathcal{O}\left( t^{21} \right) \, ,
\end{align}
where we see a direct interpretation with eq.~\eqref{eq:hil_series_deff}. It is important
to note that these invariants are not necessarily algebraically independent, as this distinction
will be computed only with the closed form of the Hilbert series.

The plethystic logarithm will be given by eq.~\eqref{eq:plethystic_log} and it will allow us
to know the type of invariants relevant for our study. The expansion is very long and we present
the degree two and three invariants along with the first syzygy, i.e. 
the first negative term. The expansion is
\begin{align}
    \mathrm{PL}[H(K[V]^G,s,t,u,q)] =& \,
    q^2 + u^2 + t u + t^2 + s u + s t + s^2 + 2 q^3 + q^2 u + q u^2 + u^3   \nonumber \\[2mm]
    &+ q^2 t + 
 q t u + t u^2 + q t^2 + t^2 u + t^3 + q^2 s + q s u + s u^2 + q s t  \nonumber \\[2mm]
 &+ 2 s t u +
  s t^2 + q s^2 + s^2 u + s^2 t + s^3 + \dots  \nonumber \\[2mm]
  &- s^2 t^2 u^3 + \dots \, .
\end{align}
From here, we are only missing the information from the closed form of Hilbert series, how many
invariants and of what degree.

\subsection{The Hilbert series of the 3HDM}

The Hilbert series of the 3HDM would naively be computed with the use of 
the residue theorem. By doing so, one quickly finds the computation to be
very complex as various problems come into play. First, by solving
higher degree polynomials in the denominator and integrating the first time we
arrive at a second integration plagued with square, cubic and quartic roots
of the integration variable. Secondly, it is not trivial to deal with the complex
roots nor to use substitution of variables in the integrand.
Omega calculus, used here for the first time for NHDM physical applications,
offers the solution for all of these shortcomings.

In the 3HDM we have
\begin{equation}
    \prod_{\alpha^+}
    \left( 1 - \prod_{l=1}^m z_l^{\alpha_l^+} \right) = 
    \left( 1 - z_1 z_2 \right) \left( 1 - \frac{z_1^2}{z_2} \right) 
    \left( 1 - \frac{z_2^2}{z_1} \right)
\end{equation}
and then by eq.~\eqref{eq:omega_hil} and eq.~\eqref{eq:decomp_3hdm} 
we write the ungraded Hilbert series
as
\begin{equation}\label{eq:omega_hil_3hdm}
    H(K[V]^G,t) = \underset{=}{\Omega} \, \left[
    \left( 1 - z_1 z_2 \right) \left( 1 - \frac{z_1^2}{z_2} \right) 
    \left( 1 - \frac{z_2^2}{z_1} \right)  \mathrm{PE}[z,t,\mathbf{8}]^3 \,
    \mathrm{PE}[z,t,\mathbf{27}] \right] \, .
\end{equation}
The code in Maple is straightforward and described in subsection~\ref{subsec:prac_maple}.
It ran for $56$ minutes using $5 \, \mathrm{Gb}$ of memory in a personal laptop equipped
with an Intel Core i7-8750H. The solution, while not immediately in the form most useful to us,
consists on the rational function
\begin{align}\label{eq:hil_3hdm_first}
    H(K[V]^G,t) = & \frac{P_{146}(t)}
    {(1 - t)^{43} (1 + t)^{20} (1 + t^2)^{10} (1 + t + t^2)^{16} (1 + t +
    t^2 + t^3 + t^4)^9 } \nonumber \\[5mm]
   &\times
   \frac{1}{(1 + t^3 + t^6) (1 + t^2 + t^3 + t^4 + t^5 + t^6 + t^8)^5} \, ,
\end{align}
where we refrained from writing the full palindromic polynomial of degree $146$ in the
numerator. 

Before going forward we note several interesting properties of eq.~\eqref{eq:hil_3hdm_first}.
First, the Krull dimension is $43$, or equivalently the number of physical parameters minus the 
three singlets. This
comes directly from eq.~\eqref{eq:krull_almost} as $(27+3\times8) - 8 = 43$. We will
make this connection exact in the next section. Second, by expanding around $t = 1$ we get
\begin{equation}
    H(K[V]^G,t) = 
    \frac{\lambda}{(1-t)^{43}}
    +  \frac{\tau}{(1-t)^{42}} 
    + \mathcal{O}((1-t)^{-41}) \, ,
\end{equation}
in agreement with eq.~\eqref{eq:elem_expansion}, and
\begin{align}
    \lambda &= \frac{1936873185248320313}{33716902552613683200000000} \, , \nonumber \\[5mm]
    \tau &= \frac{1936873185248320313}{8429225638153420800000000} \, .
\end{align}
From eq.~\eqref{eq:dimg_tau_gamma} we also add validity to our earlier assumptions
that eq.~\eqref{eq:krull_almost} is valid in the 3HDM. Thus,
\begin{equation}
    \frac{2 \tau}{\gamma} = 8 = \dim G \, ,
\end{equation}
achieving the expected result.

From eq.~\eqref{eq:hil_3hdm_first} it is not trivial to find a minimal Hilbert
series that satisfies the same Krull dimension and also
eqs.~\eqref{eq:gorenstein}--\eqref{eq:degree_hil_2} and
eqs.~\eqref{eq:dimg_tau_gamma}--\eqref{eq:krull_almost}. Simple algebraic manipulations
lead us to thousands of solutions. 
Thus, we devise a brute-force algorithm. We start by expanding eq.~\eqref{eq:hil_3hdm_first}
to $500$ terms and then we multiply it by various possible denominators with Krull dimension
$43$. Then,
we filter the numerator and test it for the palindromic property while requiring the coefficients
to be non-negative.
After an intensive search using
the NumPy package in Python we get to a seemingly minimal solution.\footnote{NumPy
turns out to be much faster with numpy.poly1d() at multiplying and manipulating
polynomials than SymPy.}
Thus, the Hilbert series describing the most general 3HDM is given by
\footnote{In the spirit of appendix~\ref{sec:appB} this is the $\mathrm{SU}(3)$
Hilbert series of three $\mathbf{8}$'s and one $\mathbf{27}$.}
\begin{equation}\label{eq:hil_3hdm_sec}
    \boxed{ H(K[V]^G,t) = \frac{P_{166}(t)}
    {\left(1-t^2\right)^7 \left(1-t^3\right)^8 \left(1-t^4\right)^6 \left(1-t^5\right)^9
    \left(1-t^6\right)^3 \left(1-t^7\right)^5 \left(1-t^9\right) \left(1-t^{12}\right)^4} }
\end{equation}
where the palindromic polynomial
$P_{166}(t)$ is too large to write here but we write it 
in subappendix~\ref{app:B_3hdm} and in an ancillary file attached to this
paper. We already see that eq.~\eqref{eq:hil_3hdm_sec} also agrees with
eq.~\eqref{eq:degree_hil_1} and eq.~\eqref{eq:degree_hil_2} in that
\begin{equation}
    \deg \left( H(K[V]^G,t) \right) = -51 = - \dim V = -q \, .
\end{equation}
Furthermore, we note the large degree invariants in eq.~\eqref{eq:hil_3hdm_sec}
contrasting with the case of the 2HDM.
The question remains if this Hilbert series is minimal. Although
we are confident with the result,
only a subsequent study on the invariants themselves can point to whether
this is an optimal solution. This will be the topic of a future paper \cite{to_come}.

\section{Properties of the NHDM}

In this section we will work out a number of interesting
properties of the NHDM, which we can learn from the tools
used so far.
Our first result concerns the counting of the physical parameters of the NDHM.
\begin{theorem}\label{th:1}
Let the model be the most general NHDM. Then, the number of physical parameters is given by
\begin{equation}
    N_\mathrm{physical} = \frac{N^4+N^2+2}{2} \, ,
\end{equation}
where $N$ are the number of doublets.\footnote{This result was first conjectured and stated by
J. P. Silva in a private discussion built on table 1 of \cite{Ferreira:2008zy}. 
Here, we show a formal proof of it.}
\end{theorem}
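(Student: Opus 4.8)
The plan is to count physical parameters as the Krull dimension of the invariant ring, which by eq.~\eqref{eq:krull_almost} equals $\dim V - \dim G$ for almost all representations of a semi-simple connected group. The basis transformation group here is $G = \mathrm{SU}(N)$, so $\dim G = N^2 - 1$. The total parameter count $\dim V$ was already computed in the excerpt as $\dim V = \frac{N^2(N^2+3)}{2}$. Naively subtracting gives $\frac{N^2(N^2+3)}{2} - (N^2 - 1)$, which I would simplify and compare against the target $\frac{N^4+N^2+2}{2}$.

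**First I would** carry out this simplification carefully. We have
\begin{align}
\dim V - (N^2 - 1) &= \frac{N^4 + 3N^2}{2} - N^2 + 1 \nonumber \\
&= \frac{N^4 + 3N^2 - 2N^2 + 2}{2} = \frac{N^4 + N^2 + 2}{2} \, ,
\end{align}
which matches $N_\mathrm{physical}$ exactly. So the arithmetic closes immediately once the two inputs $\dim V$ and $\dim G$ are in hand.

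**The main obstacle** is not the arithmetic but the justification that eq.~\eqref{eq:krull_almost} actually applies, i.e.\ that the representation of $\mathrm{SU}(N)$ on $V$ from eq.~\eqref{eq:V_decomposition} lies among the ``almost all representations'' for which Popov's formula holds with $q = \dim V$. For $N=2$ and $N=3$ this was verified directly by computing the Hilbert series (the $2$HDM and $3$HDM remarks both confirm $r = \dim V - \dim G$), but a clean proof for general $N$ requires either invoking that Popov's result fails only on a finite exceptional set and arguing our family avoids it, or checking the relevant genericity hypotheses (principal isotropy, stability) for the decomposition in eq.~\eqref{eq:V_decomposition}. I would therefore structure the proof as: (i) identify $G = \mathrm{SU}(N)$ and state $\dim G = N^2 - 1$; (ii) recall $\dim V = \frac{N^2(N^2+3)}{2}$ from the earlier computation; (iii) invoke eq.~\eqref{eq:krull_almost} to write $N_\mathrm{physical} = \dim V - \dim G$; (iv) perform the one-line simplification above. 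The genericity assumption is the only non-mechanical ingredient, and I would flag it explicitly—noting that it is the same ``almost all representations'' hypothesis assumed throughout the paper and corroborated by the explicit $2$HDM and $3$HDM computations.
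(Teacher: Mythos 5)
Your overall strategy and the arithmetic are exactly those of the paper, but the one step you explicitly leave open --- justifying that $r=\dim V-\dim G$ actually applies to this specific family of representations for \emph{every} $N$ --- is precisely the content of the paper's proof, so as written your argument has a genuine gap. Appealing to ``almost all representations'' in Popov's sense does not establish the result for an infinite, explicitly given family: each member could in principle sit in the exceptional set, and checking $N=2,3$ by direct Hilbert-series computation does not extend to general $N$.

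The paper closes this gap by not invoking the genericity statement at all. Instead it uses the exact formula $\dim K[V]^G = \dim V - \dim G + \dim G_v$ (a theorem cited as \cite{Ros}), where $G_v$ is the generic stabilizer, and then shows $\dim G_v = 0$ directly: by \cite{Popov3}, a reducible action has zero-dimensional generic stabilizer if at least one irreducible summand does, and for an irreducible representation of a simple group such as $\mathrm{SU}(N)$ this holds precisely when $\dim \mathbf{r}_i > \dim G$. Since the decomposition of $z_{ij,kl}$ in eq.~\eqref{eq:V_decomposition} always contains an irreducible summand of dimension exceeding $N^2-1$ (the large representation arising from $\mathrm{Sym}^2(\mathbf{r}_f)\otimes\mathrm{Sym}^2(\bar{\mathbf{r}}_f)$, e.g.\ the $\mathbf{5}$ for $N=2$ and the $\mathbf{27}$ for $N=3$), one gets $\dim G_v=0$ and hence the formula for all $N$. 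To complete your proof you would need to supply this stabilizer argument (or an equivalent verification of the stability/principal-isotropy hypotheses) rather than flagging it as an assumption.
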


\begin{proof}
Let the group $G=\mathrm{SU}(N)$ be the family transformations of the NHDM and let us define
a physical parameter as a family invariant parameter. Then we may define the invariant ring
$K[V]^G$ as having Krull dimension $N_\mathrm{physical}$ and field $K=\mathbb{C}$.
The dimension of the initial space $\dim V$ is then given by
\begin{equation}
    \dim V = \sum_i \dim \mathbf{r}_i = \frac{N^2(N^2+3)}{2} \, ,
\end{equation}
where $\mathbf{r}_i$ are the representations of the decomposition of the matrices $\mu$ and $z$,
and the last equality is given by the counting of total parameters.
Then, we use a theorem in \cite{Ros}, asserting that the Krull dimension of $K[V]^G$ is given by
\begin{equation}
    \dim K[V]^G = \dim V - \dim G +  \dim G_v  \, ,
\end{equation}
where $G_v$ is the stabilizer of $G$.\footnote{In fact there is a distinction here between the notion
of transcendence degree and Krull dimension. Nevertheless, we don't need to worry about it as
they are the same in finitely generated algebras.} Because we already know that $\dim G = N^2-1$,
we only need to compute the dimension of the stabilizer for our case. Specifically, whether
is zero 
($G$ acts freely on $V$) or not.
In ref.~\cite{Popov3}
the authors establish that if the action is reducible, and it is in our case, then $G$ acts freely
if at least one irreducible action acts freely. In particular, for the irreducible
representation of a simple group, which is the case
of $\mathrm{SU}(N)$, $\dim G_v = 0$ if and only if $\dim \mathbf{r}_i > \dim G$.
Thus, we only have to show that in the NHDM, there is always an irreducible representation
with dimension greater than $N^2-1$. This is trivial because the decomposition of the NHDM
always implies the computation of $\mathbf{r}_a \, \otimes \, \mathbf{r}_a$ for $\mathbf{r}_a$
being the adjoint representation. This will always result in at least a representation
of higher dimension than $\dim G$ which will always be needed for the decomposition. Hence,
in the NHDM
\begin{align}
    \dim K[V]^G &= \dim V - \dim G +  \dim G_v \nonumber \\[2mm]
    &= \frac{N^2(N^2+3)}{2} - (N^2-1) + 0 \nonumber \\[2mm]
    &= \frac{N^4+N^2+2}{2} \, .
\end{align}
\end{proof}
Our proof sheds light on the conditions of
eq.~\eqref{eq:krull_almost} and shows its validity for a number of cases. This result does
not hold in general for cases where symmetries are enforced in the Lagrangian. It does however
bound the number of physical parameters in any NHDM.
We summarize theorem~\ref{th:1} in table~\ref{table:nhdm_physical}.
\begin{table}[ht]
\begin{center}
\begin{tabular}{|c|c|c|}
\hline
N & Number of parameters ($\dim V$) & $N_\mathrm{physical} = \dim K[V]^G$ \\
\hline
\hline
$2$ & $14$ & $11$ \\
\hline
$3$ & $54$ &
$46$ \\
\hline
$4$ & $152$ & 
$137$ \\
\hline
$5$ & $350$ &
$326$ \\
\hline
$6$ & $702$ &
$667$ \\
\hline
$7$ & $1274$ &
$1226$ \\
\hline
\dots & \dots &
\dots \\
\hline
$N$ & $\frac{N^2(N^2+3)}{2}$ & $\frac{N^4+N^2+2}{2}$ \\
\hline
\end{tabular}
\caption{\label{table:nhdm_physical} Physical parameters of the NHDM with a group
of family transformations $\mathrm{SU}(N)$.}
\end{center}
\end{table}

Another result we provide regards the decomposition
of multi-Higgs doublet models.
\begin{theorem}\label{th:rf}
Let the model be the NHDM with $N>3$. Then
the vector space of parameters $V$ is decomposed as
\begin{align}\label{eq:V_decomp}
    V = 3(\mathbf{1}) \, \oplus \, 3\left( \mathbf{N^2-1} \right)  \, 
    \oplus \, \left( \mathbf{\frac{N^2(N+1)(N-3)}{4}} \right) 
    \, \oplus \, \left( \mathbf{\frac{N^2(N-1)(N+3)}{4}} \right) \, ,
\end{align}
with
\begin{align}
    \dim V &= 3 \, + \, 3\left( N^2-1 \right)  \, 
    + \, \left( \frac{N^2(N+1)(N-3)}{4} \right) 
    \, + \, \left( \frac{N^2(N-1)(N+3)}{4} \right) \nonumber \\[2mm]
    &= \frac{N^2(N^2+3)}{2} \, .
\end{align}
The decomposition of $\mu_{ij}$ and $z_{ij,kl}$ is given by
\begin{align}\label{eq:gen_rep_decom}
    \mu_{ij} &\rightarrow \mathbf{1} \, \oplus \, (\mathbf{N^2-1}) \, , \nonumber \\[5mm]
    z_{ij,kl} &\rightarrow 2(\mathbf{1}) \, \oplus \, 2\left( \mathbf{N^2-1} \right)  \, 
    \oplus \, \left( \mathbf{\frac{N^2(N+1)(N-3)}{4}} \right) 
    \, \oplus \, \left( \mathbf{\frac{N^2(N-1)(N+3)}{4}} \right) \, .
\end{align}
\end{theorem}
The proof for this theorem is given in appendix~\ref{sec:appA}.\footnote{We thank Renato Fonseca
for providing the outline of this proof in a private communication.}
\begin{table}[t]
\begin{center}
\begin{tabular}{|c|c|c|c|}
\hline
N & $\mu_{ij}$ & $z_{ij,kl}$ & Number of parameters ($\dim V$) \\
\hline
\hline
$2$ & $\mathbf{1} \, \oplus \, \mathbf{3}$ & $ 2(\mathbf{1}) \, \oplus \, \mathbf{3} \,
\oplus \, \mathbf{5} $ & $14$ \\
\hline
$3$ & $\mathbf{1} \, \oplus \, \mathbf{8}$ &
$2(\mathbf{1}) \, \oplus \, 2(\mathbf{8}) \, \oplus \, \mathbf{27}$ & $54$ \\
\hline
$4$ & $\mathbf{1} \, \oplus \, \mathbf{15}$ & 
$2(\mathbf{1}) \, \oplus \, 2(\mathbf{15}) \, \oplus \, \mathbf{20} \, \oplus \, \mathbf{84}$ & $152$ \\
\hline
$5$ & $\mathbf{1} \, \oplus \, \mathbf{24}$ &
$2(\mathbf{1}) \, \oplus \, 2(\mathbf{24}) \, \oplus \, \mathbf{75} \, \oplus \, \mathbf{200}$ & $350$ \\
\hline
$6$ & $\mathbf{1} \, \oplus \, \mathbf{35}$ &
$2(\mathbf{1}) \, \oplus \, 2(\mathbf{35}) \, \oplus \, \mathbf{189} \, \oplus \, \mathbf{405}$ & $702$ \\
\hline
$7$ & $\mathbf{1} \, \oplus \, \mathbf{48}$ &
$2(\mathbf{1}) \, \oplus \, 2(\mathbf{48}) \, \oplus \, \mathbf{392} \, \oplus \, \mathbf{735}$ & $1274$ \\
\hline
\dots & \dots &
\dots & \dots \\
\hline
$N$ & $\mathbf{1} \, \oplus \, (\mathbf{N^2-1})$ & $2(\mathbf{1}) \, \oplus \, 2\left( \mathbf{N^2-1} \right)  \, 
    \oplus \, \mathbf{a}_N
    \, \oplus \, \mathbf{b}_N $
& $\frac{N^2(N^2+3)}{2}$ \\
\hline
\end{tabular}
\caption{\label{table:nhdm_reps} Representation decomposition of the NHDM where
$a_N$ and $b_N$ are given in eq.~\eqref{eq:bn_dn}.}
\end{center}
\end{table}
We summarize our results in table~\ref{table:nhdm_reps}, where we
used the LieART package for
Mathematica to confirm results. The functions $a_N$ and $b_N$ are defined as
\begin{align}\label{eq:bn_dn}
    a_N = \frac{N^2(N+1)(N-3)}{4} \quad \mathrm{and} \quad b_N = \frac{N^2(N-1)(N+3)}{4} \, ,
\end{align}
and the last line is representative of theorem~\ref{th:rf}.

\section{Parameter counting with symmetries}\label{sec:symmetries}

So far we have discussed both the decomposition of the matrices
of the Lagrangian, and the invariants of the most
general multi-Higgs scalar models. However, we have not presented any
result towards the use of symmetries in the Lagrangian. In this section
we show how to count all of the remaining parameters after imposing a symmetry.

The main idea of this technique is to enumerate invariants
for a symmetry group $G$. By doing so, we count the number of parameters
of the Lagrangian, which in specific cases might be larger than
the number of independent physical parameters.
In column two of table~\ref{table:nhdm_physical}
we obtained the number of physical parameters needed
to describe a generic NHDM.
Using the basis freedom, we could reduce the number of parameters
to that obtained in column three of table~\ref{table:nhdm_physical}.
The numbers obtained in this section parallel those obtained in
column two of table~\ref{table:nhdm_physical},
but now for a symmetry-constrained NHDM.
Indeed, specific groups might still allow for some remnant basis
freedom, which might be used to reduce the number of independent
parameters required.
Below, an example is provided in the 2HDM with $\mathbb{Z}_2$
symmetry and subsequent discussion after eq.~\eqref{eq:2hdm_z2_z}.
As we will see, applying $\mathbb{Z}_2$ to the 2HDM
still allows for a rephasing of the second doublet;
a freedom which may be used in order to cancel the imaginary
part of one quartic coupling.
In such examples, one may reduce the counting of
parameters by choosing a specific basis. Nevertheless, the maximum
number of parameters that remain in the Lagrangian are given
by the following basis-invariant technique.

\begin{theorem}
Let us consider a symmetry by the action of a group $G$. We choose a representation
$\rho(g) = \mathbf{r} = \bigoplus \mathbf{r}_i$ for the fields. 
Then the number of parameters is given by
the number of singlets in
\begin{align}
    &\mu_{i j} \rightarrow \bar{\mathbf{r}}  \, \otimes \, \mathbf{r} \, , \nonumber \\[2mm]
    &z_{ij,kl} \rightarrow \left[ \mathrm{Sym}^2 (\mathbf{r}) \,
    \otimes \, \mathrm{Sym}^2 (\bar{\mathbf{r}} ) \right] \, \oplus \, 
    \left[ \mathrm{Alt}^2 (\mathbf{r}) \, \otimes \, \mathrm{Alt}^2 (\bar{\mathbf{r}} )  \right] \, .
\end{align}
\end{theorem}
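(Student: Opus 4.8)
The plan is to count the parameters that survive after imposing the symmetry group $G$ by counting invariants, i.e. the multiplicity of the trivial representation (the singlet $\mathbf{1}$) in the relevant tensor products. The physical mechanism is that a parameter in the Lagrangian is allowed by the symmetry precisely when the corresponding term $V_H$ is $G$-invariant. Since $V_H$ is built by contracting the coupling matrices $\mu_{ij}$ and $z_{ij,kl}$ with the fields $\Phi$, and the fields are assigned the representation $\mathbf{r} = \bigoplus_i \mathbf{r}_i$, each independent invariant term corresponds to one free coupling.

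\medskip

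First I would recall the structure already established in Section~\ref{sec:group_s}: for the fully general NHDM one writes the fields in the fundamental $\mathbf{r}_f$ of $\mathrm{SU}(N)$ and decomposes the coupling tensors as in eq.~\eqref{eq:V_decomposition}. The key observation is that the derivation of eq.~\eqref{eq:dimV_sym} used only the hermiticity and symmetry properties $\mu_{ij}=\mu_{ji}^*$ and $z_{ij,kl}=z_{kl,ij}=z_{ji,lk}^*$ together with the tensor identity $\mathbf{r}\otimes\mathbf{r}=\mathrm{Sym}^2(\mathbf{r})\oplus\mathrm{Alt}^2(\mathbf{r})$; none of these steps depended on $\mathbf{r}$ being irreducible or being the fundamental. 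Thus I would argue that replacing $\mathbf{r}_f$ by an arbitrary (possibly reducible) representation $\mathbf{r}$ of an arbitrary symmetry group $G$ leaves the decomposition of how $\mu_{ij}$ and $z_{ij,kl}$ transform completely intact, giving exactly the statement's two transformation rules.

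\medskip

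Next I would invoke the standard representation-theoretic fact that the number of linearly independent $G$-invariants inside a representation $W$ equals the multiplicity of the trivial representation in $W$, which by Schur/orthogonality is $\langle \chi_{\mathbf{1}},\chi_W\rangle = \int_G d\mu_G\,\chi_W(g)$ (for finite $G$, the averaged sum). A real coupling parameter is free exactly when it multiplies a $G$-invariant contraction, so the count of surviving parameters in $\mu$ is the number of singlets in $\bar{\mathbf{r}}\otimes\mathbf{r}$, and the count in $z$ is the number of singlets in $[\mathrm{Sym}^2(\mathbf{r})\otimes\mathrm{Sym}^2(\bar{\mathbf{r}})]\oplus[\mathrm{Alt}^2(\mathbf{r})\otimes\mathrm{Alt}^2(\bar{\mathbf{r}})]$. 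Summing the two gives the claimed total.

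\medskip

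The main obstacle I anticipate is the subtle point of \emph{reality and double counting}. The hermiticity conditions mean $\mu$ and $z$ are not free complex tensors but are constrained so that $V_H$ is real; consequently one must be careful that counting singlets in the complex tensor product correctly reproduces the number of real free parameters, and that the pairing of a representation with its conjugate does not overcount. I would address this by noting that the map $g.x=x$ defining invariants in eq.~\eqref{eq:NHDM_lag} is applied directly to the hermitian-constrained space that already transforms as in eq.~\eqref{eq:V_decomposition}, so the singlet count is taken within that constrained space and the hermiticity is already built into the $\mathrm{Sym}/\mathrm{Alt}$ projections; hence no further real/complex adjustment is needed beyond what eq.~\eqref{eq:dimV_sym} encodes. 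A secondary check I would perform is to verify the formula against the unbroken case $G=\mathrm{SU}(N)$, $\mathbf{r}=\mathbf{r}_f$, where the number of singlets must reproduce the three $\mathbf{1}$'s of eq.~\eqref{eq:V_decomposition} (one from $\mu$, two from $z$), confirming the counting is normalized correctly.
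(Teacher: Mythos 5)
Your proposal is correct, and it isolates the same two essential ingredients as the paper: (i) the couplings that survive the imposition of $G$ are exactly the linear $G$-invariants of the parameter space, and (ii) the parameter space of $(\mu,z)$ transforms as $\bar{\mathbf{r}}\otimes\mathbf{r}$ together with the $\mathrm{Sym}^2\otimes\mathrm{Sym}^2\oplus\mathrm{Alt}^2\otimes\mathrm{Alt}^2$ combination. Where you differ is in how the counting step is justified. You count invariants directly as the multiplicity of the trivial representation via character orthogonality, $\langle\chi_{\mathbf{1}},\chi_W\rangle$, which is elementary and entirely self-contained. The paper instead routes the argument through the Hilbert series: writing the total representation as $\mathbf{r}_T=n(\mathbf{1})\oplus\bigoplus_{\mathbf{r}_j\neq\mathbf{1}}\mathbf{r}_j$ as in eq.~\eqref{eq:rt}, it observes that the singlet factors pull out of the Molien/Weyl sum to give $H(K[V]^G,t)=(1-t)^{-n}H(K[V']^G,t)$, while the non-singlet summands cannot contribute degree-one invariants, so the number of degree-one invariants is $n$. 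Your route avoids the Hilbert series altogether (which is, ironically, the very selling point the paper advertises for this method); the paper's route buys the stronger structural statement that the full Hilbert series factorizes, embedding the symmetry-counting result in the invariant-theory framework of the rest of the text. Two minor points in your favour: your explicit argument that the derivation of eq.~\eqref{eq:dimV_sym} uses only the hermiticity and index-symmetry of $\mu$ and $z$, and hence survives the replacement $\mathbf{r}_f\to\mathbf{r}$ for reducible $\mathbf{r}$ and arbitrary $G$, is left implicit in the paper and is worth making; and your reality/double-counting worry is resolved exactly as you say --- the singlets are counted in the hermitian-constrained real vector space, as the paper's $\mathbb{Z}_2$ example ($2+6=8$ real parameters, with the complex $\lambda_5$ contributing two singlets) confirms.
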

\begin{proof}
By imposing a symmetry in the Lagrangian we are decomposing the vector space
$V = \mu \, \oplus \, z$ in irreducible representations of $G$. In contrast with
the strategy for basis invariants, we know that only the degree one invariants can remain.
These correspond to elements $(1-t)$ in the denominator of the Hilbert series.
A remarkable property of these terms is that they can be factored out from
the summation as they do not depend on the representation.\footnote{In fact
we had already silently agreed to this when we left out the three singlets
of the 3HDM from the computation of the Hilbert series.}
This can be understood by
decomposing
\begin{align}\label{eq:rt}
    \mathbf{r}_T &= (\bar{\mathbf{r}}  \, \otimes \, \mathbf{r}) \, \oplus \,
    \left[ \mathrm{Sym}^2 (\mathbf{r}) \,
    \otimes \, \mathrm{Sym}^2 (\bar{\mathbf{r}} ) \right] \, \oplus \, 
    \left[ \mathrm{Alt}^2 (\mathbf{r}) \, \otimes \, \mathrm{Alt}^2 (\bar{\mathbf{r}} )  \right]
    \nonumber \\[2mm]
    &= n(\mathbf{1})  \, \oplus \, \bigoplus_{\mathbf{r}_j \neq \mathbf{1}} \mathbf{r}_j
    \, ,
\end{align}
where $\mathbf{r}_T$ is the representation of the full decomposition.
Then,
\begin{equation}
    H(K[V]^G, t) = \SumInt \frac{1}{\det \left( \mathds{1} - t \, \mathbf{r}_T \right) }
    =  \frac{1}{(1-t)^n}
    \SumInt
    \frac{1}{\det \left( \mathds{1} - t \, \bigoplus_{\mathbf{r}_j 
    \neq \mathbf{1}} \mathbf{r}_j  \right)}
    \, ,
\end{equation}
and thus,
\begin{equation}
    H(K[V]^G, t) = (1-t)^{-n} H(K[V']^G, t) \, .
\end{equation}
The remaining irreducible representations $\mathbf{r}_j$ are not invariant by themselves
and require higher degrees to form an invariant.
Consequently, we need not compute the Hilbert series to know how many
invariants of degree one exist. This quantity is given by $n$ in eq.~\eqref{eq:rt},
the number of singlets.
\end{proof}
\begin{remark}
We consider the 2HDM with a $\mathbb{Z}_2$ symmetry. We choose the representation
$\mathbf{r}=\mathbf{1} \, \oplus \, \mathbf{1}'$ corresponding to the transformation
$\mathrm{diag} (1,-1)$. Then, with the product rule
\begin{align}
    \mathbf{1}' \, \otimes \, \mathbf{1}' = \mathbf{1} \, ,
\end{align}
we get
\begin{equation}
    \mu_{ij} \rightarrow (\mathbf{1} \, \oplus \, \mathbf{1}')^{\otimes 2}
    = 2(\mathbf{1}) \, \oplus \, 2(\mathbf{1}') \, .
\end{equation}
For $z_{ij,kl}$ we need first to know what corresponds to $\mathrm{Sym}$ and $\mathrm{Alt}$.
This can easily be achieved with the character table and with
\begin{align}
    \chi_{\mathrm{Sym}^2}(g) &= \frac{1}{2} \left( \chi(g)^2 + \chi(g^2) \right) \, ,
    \nonumber \\[2mm]
    \chi_{\mathrm{Alt}^2}(g) &= \frac{1}{2} \left( \chi(g)^2 - \chi(g^2) \right) \, ,
\end{align}
where $\chi(g)$ is the character of $g$. This system always has a solution. Here, by choosing
$g=a$, with $a^2 = e$, we have
\begin{align}
    \chi_{\mathrm{Sym}^2}(a) &= \frac{1}{2} \left( 0 + 2 \right) = 1 \, ,
    \nonumber \\[2mm]
    \chi_{\mathrm{Alt}^2}(a) &= \frac{1}{2} \left( 0 - 2 \right) = -1 \, .
\end{align}
The only possible solution is that
\begin{equation}
    (\mathbf{1} \, \oplus \, \mathbf{1}')^{\otimes 2} = 
    \mathrm{Sym}^2(\mathbf{1} \, \oplus \, \mathbf{1}') \, \oplus \,
    \mathrm{Alt}^2(\mathbf{1} \, \oplus \, \mathbf{1}') =
    (\mathbf{1} \, \oplus \, \mathbf{1} \, \oplus \, \mathbf{1}') 
    \, \oplus \, (\mathbf{1}') \, .
\end{equation}
Consequently, the matrix $z_{ij,kl}$ decomposes as
\begin{align}\label{eq:2hdm_z2_z}
    z_{ij,kl} &\rightarrow \left[ (\mathbf{1} \, \oplus \, \mathbf{1} \, \oplus \, \mathbf{1}')
    \, \otimes \, (\mathbf{1} \, \oplus \, \mathbf{1} \, \oplus \, \mathbf{1}')  \right]
    \, \oplus \, 
    \left[ \mathbf{1}' \, \otimes \, \mathbf{1}' \right]
    \nonumber \\[2mm]
    &= 6(\mathbf{1}) \, \oplus \, 4(\mathbf{1}') \, .
\end{align}
Hence, we conclude that the 2HDM has $8$ parameters left after imposing
$\mathbb{Z}_2$, $2$ from $\mu_{ij}$ and $6$ from $z_{ij,kl}$. We note four important
facts. First, these $8$ parameters are not physical, but $7$ will be.
Indeed, $\lambda_5$ can be made real after rephasing.
Second, this result is completely basis-invariant. We might have chosen
an equivalent matrix other than $\mathrm{diag} (1,-1)$, but nevertheless, any equivalent
two dimensional
representation would still decompose as above.
Third, we have not
taken rephasings into account as this is a consequence of the action of a global $\mathrm{U}(1)$
for one field alone. This is only possible after $\mathbb{Z}_2$ is imposed.
Lastly, we may even do better than just count the number of parameters in $z_{ij,kl}$. 
By knowing how many singlets come from the $\mathrm{Alt}$ part, in this case just one,
we can use the decomposition in $\mathrm{SU}(2)$ to assign it to a particular group
of parameters. Similarly, we can do the same for the $\mathrm{Sym}$ part.
\end{remark}

\begin{remark}
We consider the 2HDM with a $\mathbb{Z}_3$ symmetry. We define $\omega = \exp(2i \pi /3)$
and choose $\mathbf{r} = \mathbf{1}' \, \oplus \, \mathbf{1}''$ corresponding
to the action of $\mathrm{diag}(\omega, \omega^2)$. The product rules are
given by
\begin{align}
    \mathbf{1}' \, \otimes \, \mathbf{1}' = \mathbf{1}'' \, , \quad
    \mathbf{1}'' \, \otimes \, \mathbf{1}'' = \mathbf{1}' \, , \quad
    \mathbf{1}' \, \otimes \, \mathbf{1}'' = \mathbf{1} \, .
\end{align}
Then,
\begin{equation}
    \mu_{ij} \rightarrow (\mathbf{1}' \, \oplus \, \mathbf{1}'')^{\otimes 2}
    = 2(\mathbf{1}) \, \oplus \, \mathbf{1}' \, \oplus \, \mathbf{1}'' \, .
\end{equation}
Choosing the element $g=a$ with $a^3 = e$, the identity element, we find
\begin{align}
    \chi_{\mathrm{Sym}^2}(a) &= \frac{1}{2} \left( 1 + (-1) \right) = 0 \, ,
    \nonumber \\[2mm]
    \chi_{\mathrm{Alt}^2}(a) &= \frac{1}{2} \left( 1 - (-1) \right) = 1 \, ,
\end{align}
where we used that $\chi(a) = \Tr \left[ \mathrm{diag}(\omega, \omega^2) \right] = -1
= \chi(a^2)$. Then, the only possibility is
\begin{equation}
    (\mathbf{1}' \, \oplus \, \mathbf{1}'')^{\otimes 2} = 
    \mathrm{Sym}^2(\mathbf{1}' \, \oplus \, \mathbf{1}'') \, \oplus \,
    \mathrm{Alt}^2(\mathbf{1}' \, \oplus \, \mathbf{1}'') =
    (\mathbf{1} \, \oplus \, \mathbf{1}' \, \oplus \, \mathbf{1}'') 
    \, \oplus \, (\mathbf{1}) \, .
\end{equation}
Therefore, we decompose $z_{ij,kl}$ as
\begin{align}
    z_{ij,kl} &\rightarrow \left[ (\mathbf{1} \, \oplus \, \mathbf{1}' \, \oplus \, \mathbf{1}'')
    \, \otimes \, (\mathbf{1} \, \oplus \, \mathbf{1}' \, \oplus \, \mathbf{1}'')  \right]
    \, \oplus \, 
    \left[ \mathbf{1} \, \otimes \, \mathbf{1} \right]
    \nonumber \\[2mm]
    &= 4(\mathbf{1}) \, \oplus \, 3(\mathbf{1}') \, \oplus \, 3(\mathbf{1}'') \, .
\end{align}
Hence, we conclude that the 2HDM with $\mathbb{Z}_3$ symmetry has $6$ parameters,
$2$ from $\mu_{ij}$ and $4$ from $z_{ij,kl}$. Here, the number of parameters
coincides with the number of physical parameters.
\end{remark}
\begin{remark}
We consider the 2HDM with a $U(1)$ symmetry. We use the transformation
$\mathrm{diag}(e^{i \xi}, e^{-i \xi})$, which corresponds to 
$\mathbf{r} = \bar{\mathbf{1}}' \, \oplus \, \mathbf{1}'$. The product rules are
given by
\begin{align}\label{eq:u1_prod_rules}
    \mathbf{1}' \, \otimes \, \mathbf{1}' = \mathbf{1}'' \, , \quad
    \bar{\mathbf{1}}' \, \otimes \, \bar{\mathbf{1}}' = \bar{\mathbf{1}}'' \, , \quad
    \bar{\mathbf{1}}' \, \otimes \, \mathbf{1}' = \mathbf{1} \, ,
\end{align}
where contrarily to the other examples, we have another representation
appearing in the product rules, a consequence
of $G$ infinite. With it, $\mu_{ij}$ decomposes as
\begin{equation}
    \mu_{ij} \rightarrow (\bar{\mathbf{1}}' \, \oplus \, \mathbf{1}')^{\otimes 2}
    = 2(\mathbf{1}) \, \oplus \, \mathbf{1}'' \, \oplus \, \bar{\mathbf{1}}'' \, .
\end{equation}
The characters, if we choose $a$ to be the element
with representation $\mathrm{diag}(e^{i \xi}, e^{-i \xi})$, are given by
\begin{align}
    \chi_{\mathrm{Sym}^2}(a) &= \frac{1}{2} 
    \left( 4 \cos ^2(\xi ) + 2 \cos (2 \xi ) \right) = 1 + 2 \cos(2 \xi) \, ,
    \nonumber \\[2mm]
    \chi_{\mathrm{Alt}^2}(a) &= \frac{1}{2} \left( 4 \cos ^2(\xi ) - 2 \cos (2 \xi ) \right) = 1 \, .
\end{align}
Thus, the only solution is given by
\begin{equation}\label{eq:u1_sym_alt}
    (\bar{\mathbf{1}}' \, \oplus \, \mathbf{1}')^{\otimes 2} = 
    \mathrm{Sym}^2(\bar{\mathbf{1}}' \, \oplus \, \mathbf{1}') \, \oplus \,
    \mathrm{Alt}^2(\bar{\mathbf{1}}' \, \oplus \, \mathbf{1}') =
    (\mathbf{1} \, \oplus \, \mathbf{1}'' \, \oplus \, \bar{\mathbf{1}}'') 
    \, \oplus \, (\mathbf{1}) \, .
\end{equation}
Therefore, we decompose $z_{ij,kl}$ as
\begin{align}
    z_{ij,kl} &\rightarrow \left[ (\mathbf{1} \, \oplus \, 
    \mathbf{1}'' \, \oplus \, \bar{\mathbf{1}}'')
    \, \otimes \, (\mathbf{1} \, \oplus \, 
    \mathbf{1}'' \, \oplus \, \bar{\mathbf{1}}'')  \right]
    \, \oplus \, 
    \left[ \mathbf{1} \, \otimes \, \mathbf{1} \right]
    \nonumber \\[2mm]
    &= 4(\mathbf{1}) \, \oplus \, 3(\mathbf{1}'''') \, \oplus \, 3(\bar{\mathbf{1}}'''') \, .
\end{align}
This result is remarkably similar to the case of $\mathbb{Z}_3$. This is not
coincidental as in fact they lead to the same symmetry constraint
in the 2HDM \cite{Ferreira:2008zy}. This can be
seen from the fact that it only differs in the representations that are primed,
those that we will not keep.
\end{remark}
\begin{remark}
We consider the 2HDM with a $S_3$ symmetry. We choose $\mathbf{r} = \mathbf{2}$ corresponding
to the action of doublet representation in the fields. The product rules are
given by
\begin{align}\label{eq:s3_prod_rules}
    \mathbf{1}' \, \otimes \, \mathbf{1}' = \mathbf{1} \, , \quad
    \mathbf{1}' \, \otimes \, \mathbf{2} = \mathbf{2} \, , \quad
    \mathbf{2} \, \otimes \, \mathbf{2} = \mathbf{1} \, \oplus \, \mathbf{1}' \, \oplus \,
    \mathbf{2} \, .
\end{align}
Then $\mu_{ij}$ decomposes as
\begin{equation}
    \mu_{ij} \rightarrow \mathbf{2}^{\otimes 2}
    = \mathbf{1} \, \oplus \, \mathbf{1}' \, \oplus \, \mathbf{2} \, .
\end{equation}
Choosing $g=(1,2)$ and consulting the character table we get
\begin{align}
    \chi_{\mathrm{Sym}^2}(g) &= \frac{1}{2} 
    \left( 0 + 2 \right) = 1 \, ,
    \nonumber \\[2mm]
    \chi_{\mathrm{Alt}^2}(g) &= \frac{1}{2} \left( 0 - 2 \right) = -1 \, .
\end{align}
Then, the only possibility is
\begin{equation}\label{eq:s3_sym_alt}
    \mathbf{2}^{\otimes 2} = 
    \mathrm{Sym}^2(\mathbf{2}) \, \oplus \,
    \mathrm{Alt}^2(\mathbf{2}) =
    (\mathbf{1} \, \oplus \, \mathbf{2}) 
    \, \oplus \, (\mathbf{1}') \, .
\end{equation}
Thus, $z_{ij,kl}$ decomposes as
\begin{align}
    z_{ij,kl} &\rightarrow \left[ (\mathbf{1} \, \oplus \, 
    \mathbf{2})
    \, \otimes \, (\mathbf{1} \, \oplus \, 
    \mathbf{2})  \right]
    \, \oplus \, 
    \left[ \mathbf{1}' \, \otimes \, \mathbf{1}' \right]
    \nonumber \\[2mm]
    &= 3(\mathbf{1}) \, \oplus \, \mathbf{1}' \, \oplus \, 3(\mathbf{2}) \, .
\end{align}
Consequently, this model has $4$ parameters, $1$ from $\mu_{ij}$ and $3$ from $z_{ij,kl}$.
This can be checked against \cite{Cogollo:2016dsd}.
\end{remark}
\begin{remark}
We consider the 3HDM with a $A_4$ symmetry. We choose $\mathbf{r} = \mathbf{3}$ corresponding
to the action of doublet representation in the fields. The product rules are
given by
\begin{align}\label{eq:a4_prod_rules}
    \mathbf{1}' \, \otimes \, \mathbf{1}' = \mathbf{1}'' \, , \quad
    \mathbf{1}'' \, \otimes \, \mathbf{1}'' = \mathbf{1}' \, , \quad
    \mathbf{1}' \, \otimes \, \mathbf{1}'' = \mathbf{1} \, , \quad
    \mathbf{3} \, \otimes \, \mathbf{3} = \mathbf{1} \, \oplus \, \mathbf{1}' \, \oplus \,
    \mathbf{1}'' \, \oplus \, 2(\mathbf{3}) \, .
\end{align}
Then $\mu_{ij}$ decomposes as
\begin{equation}
    \mu_{ij} \rightarrow \mathbf{3}^{\otimes 2}
    = \mathbf{1} \, \oplus \, \mathbf{1}' \, \oplus \,
    \mathbf{1}'' \, \oplus \, 2(\mathbf{3}) \, .
\end{equation}
Choosing $g=(1,2)(3,4)$ and consulting the character table we get
\begin{align}
    \chi_{\mathrm{Sym}^2}(g) &= \frac{1}{2} 
    \left( (-1)^2 + 3 \right) = 2 \, ,
    \nonumber \\[2mm]
    \chi_{\mathrm{Alt}^2}(g) &= \frac{1}{2} \left( (-1)^2 - 3  \right) = -1 \, .
\end{align}
where we used $g^2 = e$, the identity element. The only possibility is
\begin{equation}\label{eq:a4_sym_alt}
    \mathbf{3}^{\otimes 2} = 
    \mathrm{Sym}^2(\mathbf{3}) \, \oplus \,
    \mathrm{Alt}^2(\mathbf{3}) =
    (\mathbf{1} \, \oplus \, \mathbf{1}' \, \oplus \, \mathbf{1}'' \, \oplus \, \mathbf{3}) 
    \, \oplus \, (\mathbf{3}) \, .
\end{equation}
Thus, $z_{ij,kl}$ decomposes as
\begin{align}
    z_{ij,kl} &\rightarrow \left[ (\mathbf{1} \, \oplus \, 
    \mathbf{1}' \, \oplus \, \mathbf{1}'' \, \oplus \, \mathbf{3}) \,
    \otimes \, (\mathbf{1} \, \oplus \, \mathbf{1}' \, \oplus \, 
    \mathbf{1}'' \, \oplus \, \mathbf{3}) \right]
    \, \oplus \, 
    \left[ \mathbf{3} \, \otimes \, \mathbf{3} \right]
    \nonumber \\[2mm]
    &= 5(\mathbf{1}) \, \oplus \, 5(\mathbf{1}')
    \, \oplus \, 5(\mathbf{1}'') \, \oplus \, 10(\mathbf{3}) \, .
\end{align}
With this we conclude that this model has $6$ parameters, $1$ from $\mu_{ij}$ 
and $5$ from $z_{ij,kl}$.
This can be checked against \cite{Ivanov:2014doa}.
\end{remark}
\begin{remark}
Finally we consider the 3HDM with a $S_4$ symmetry.
We will not
write every tensor product rule in this case they are many.
Choosing $\mathbf{r}=\mathbf{3}$ we get
\begin{equation}
    \mu_{ij} \rightarrow \mathbf{3}^{\otimes 2}
    = \mathbf{1} \, \oplus \, \mathbf{2} \, \oplus \,
    \mathbf{3} \, \oplus \, \mathbf{3}' \, .
\end{equation}
Choosing $g=(1,2)$ we get the characters
\begin{align}
    \chi_{\mathrm{Sym}^2}(g) &= \frac{1}{2} 
    \left( 1^2 + 3 \right) = 2 \, ,
    \nonumber \\[2mm]
    \chi_{\mathrm{Alt}^2}(g) &= \frac{1}{2} \left( 1^2 - 3  \right) = -1 \, ,
\end{align}
and then
\begin{align}
    z_{ij,kl} &\rightarrow \left[ (\mathbf{1} \, \oplus \, 
    \mathbf{2} \, \oplus \, \mathbf{3}) \,
    \otimes \, (\mathbf{1} \, \oplus \, 
    \mathbf{2} \, \oplus \, \mathbf{3}) \right]
    \, \oplus \, 
    \left[ \mathbf{3}' \, \otimes \, \mathbf{3}' \right]
    \nonumber \\[2mm]
    &= 4(\mathbf{1}) \, \oplus \, \mathbf{1}'
    \, \oplus \, 5(\mathbf{2}) \, \oplus \, 6(\mathbf{3}) \, \oplus \, 4(\mathbf{3}') \, .
\end{align}
Thus, the 3HDM with $S_4$ symmetry has $5$ parameters,
$1$ from $\mu_{ij}$ and $4$ from $z_{ij,kl}$. This can be checked against
\cite{Ivanov:2014doa}.
\end{remark}
There are many interesting analysis that one can make from these examples.
It is easy to check that choosing the 2HDM with a symmetry $\mathrm{diag}(i, -i)$
yields a similar result to the one we obtained with $\mathbb{Z}_2$. However,
choosing $\mathrm{diag}(1, i)$ yields a similar result to the one of $\mathbb{Z}_3$
and $\mathrm{U}(1)$. Both are cases in which a group effectively acts as another.

One result we can infer from this and previous sections is that due to the decomposition
of any NHDM into representations of $\mathrm{SU}(N)$, we are always guaranteed to have
three singlets. This result follows from the fact that any $G$ that we choose will
be a subgroup of $\mathrm{PSU}(N)$. If $\mathrm{SU}(N)$ guarantees three singlets,
so will any symmetry groups. Furthermore, these three will be physical parameters.

\section{Conclusions}

We studied in detail the group structure of the matrices in the scalar potential
of multi-Higgs doublet models. We show its decomposition under
irreducible representations of $\mathrm{SU}(N)$ with a simple formula
using the symmetric and antisymmetric part of the tensor product. With
this decomposition, the study of the physical parameters of the theory
becomes attainable.

We have used a tool from partition theory, Omega calculus, to
compute complicated Hilbert series without using the residue theorem. 
Its use in high-energy physics is a first, as most computations depend on the
residue theorem.
In particular,
we compute for the first time the closed form of the Hilbert series of the 3HDM,
a result previously very difficult to obtain by standard methods. From this
function, we will be able to completely characterize the physical parameters
of the 3HDM.

Using a number of formal results in invariant theory we proved that the most
general NHDM has $(N^4 + N^2 +2)/2$ physical parameters. We also showed
a theorem on the decomposition of NHDM into irreducible representations
of $\mathrm{SU}(N)$.
We presented a formula to decompose the matrices of the Lagrangian for all
$N>3$.

For the first time we derived a basis-invariant method for
counting parameters in a Lagrangian with both
basis-invariant redundancies and global symmetries. We show that
the knowledge of tensor product decomposition and character theory
is enough for attaining this purpose. Furthermore, this technique does
not require analysis of the Lagrangian itself.

With invariant theory, we hope that a clear path to a full basis-invariant
overview to the physical parameters of NHDM will soon be possible.
There are still many unanswered questions on the CP properties and the
physical parameters in theories with symmetries,
both of which we have not addressed in this paper.

\vspace{2ex}

\acknowledgments
\noindent
M. P. B. is very grateful to J. P. Silva for all the useful discussions on
scalar models
and endless advice. M. P. B. is also grateful to A. Trautner for
the time spent explaining
invariant theory in scalar models.
This work is supported in part by the Portuguese
Funda\c{c}\~{a}o para a Ci\^{e}ncia
e Tecnologia (FCT) under contract SFRH/BD/146718/2019.
This work is also supported in part by
FCT under contracts
CERN/FIS-PAR/0008/2019,
PTDC/FIS-PAR/29436, UIDB/00777/2020,
and UIDP/00777/2020.




\appendix

\section{Proof of theorem~\ref{th:rf}}\label{sec:appA}
 
In section~\ref{sec:group_s} we decomposed
the matrices $\mu$ and $z$ of the scalar potential
\begin{equation}
    V_{H} = \mu_{ij} (\Phi^\dagger_i \Phi_j) + z_{ij,kl} (\Phi^\dagger_i \Phi_j) (\Phi^\dagger_k \Phi_l) \, ,
\end{equation}
such that its bare elements transform under a direct sum of irreducible representations of $\mathrm{SU}(N)$.
In order to do that we have followed Trautner \cite{Trautner:2018ipq} in the use of projection
operators
to enforce hermiticity and symmetrization in indices. 
With it we concluded that the matrices $\mu$ and $z$ decompose as
\begin{align}\label{eq:decommuz}
    &\mu_{i j} \rightarrow \bar{\mathbf{r}}  \, \otimes \, \mathbf{r} \, , \nonumber \\[2mm]
    &z_{ij,kl} \rightarrow \left[ \mathrm{Sym}^2 (\mathbf{r}) \,
    \otimes \, \mathrm{Sym}^2 (\bar{\mathbf{r}} ) \right] \, \oplus \, 
    \left[ \mathrm{Alt}^2 (\mathbf{r}) \, \otimes \, \mathrm{Alt}^2 (\bar{\mathbf{r}} )  \right] \, .
\end{align}
With eq.~\eqref{eq:decommuz} we provide a proof of theorem~\ref{th:rf}.

Knowing that the decomposition of $\mu$ is trivial, we will focus our attention
to the decomposition of $z_{ij,kl}$. To that end, we make use of Young tableaux.
Let $G=\mathrm{SU}(N)$ with $N>3$. Then the
square of the fundamental and anti-fundamental representations
is given by
\ytableausetup{centertableaux}
\begin{equation}
\mathbf{r}_f \, \otimes \, \mathbf{r}_f =
\begin{ytableau}
    ~
\end{ytableau}
\, \otimes \,
\begin{ytableau}
    ~
\end{ytableau}
=
\begin{ytableau}
    ~ \\
    ~
\end{ytableau}
\, \oplus \,
\begin{ytableau}
    ~ & ~
\end{ytableau}
= \mathrm{Alt}^2(\mathbf{r}_f) \, \oplus \, \mathrm{Sym}^2(\mathbf{r}_f) \, ,
\end{equation}
and
\begin{equation}
\bar{\mathbf{r}}_f \, \otimes \, \bar{\mathbf{r}}_f =
{}_{N-1} \left\{
\begin{ytableau}
    ~ \\
    \vdots \\
    ~ \\
\end{ytableau}
\right.
\, \otimes \,
\begin{ytableau}
    ~ \\
    \vdots \\
    ~ \\
\end{ytableau}
=
{}_{N-2} \left\{
\begin{ytableau}
    ~ \\
    \vdots \\
    ~
\end{ytableau}
\right.
\, \oplus \,
{}_{N-1} \left\{
\begin{ytableau}
    ~ & ~ \\
    \vdots & \vdots \\
    ~ & ~ 
\end{ytableau}
\right.
= \mathrm{Alt}^2(\bar{\mathbf{r}}_f) \, \oplus \, \mathrm{Sym}^2(\bar{\mathbf{r}}_f) \, .
\end{equation}
Thus, we compute the terms in eq.~\eqref{eq:decommuz} as
\begin{align}\label{eq:young_symsym}
\mathrm{Sym}^2(\bar{\mathbf{r}}_f) \, \otimes \, \mathrm{Sym}^2(\mathbf{r}_f) &=
{}_{N-1} 
\left\{
\begin{ytableau}
    ~ & ~ \\
    \vdots & \vdots \\
    ~ & ~ 
\end{ytableau}
\right.
\, \otimes \,
\begin{ytableau}
    ~ & ~
\end{ytableau}
\nonumber \\[5mm]
&=
\mathbf{1} \, \oplus \,
{}_{N - 1}  \left\{
\begin{ytableau}
    ~ & ~ \\
    \vdots & \none \\
    ~
\end{ytableau}
\right.
\, \oplus \,
{}_{N-1} 
\left\{
\begin{ytableau}
    ~ & ~ & ~ & ~ \\
    \vdots & \vdots & \none \\
    ~ & ~ & \none
\end{ytableau}
\right. \, ,
\end{align}
and also
\begin{align}\label{eq:young_altalt}
\mathrm{Alt}^2(\bar{\mathbf{r}}_f) \, \otimes \, \mathrm{Alt}^2(\mathbf{r}_f) &=
{}_{N-2} 
\left\{
\begin{ytableau}
    ~ \\
    \vdots \\
    ~
\end{ytableau}
\right.
\, \otimes \,
\begin{ytableau}
    ~ \\
    ~
\end{ytableau}
\nonumber \\[5mm]
&=
\mathbf{1} \, \oplus \,
{}_{N - 1}  \left\{
\begin{ytableau}
    ~ & ~ \\
    \vdots & \none \\
    ~
\end{ytableau}
\right.
\, \oplus \,
{}_{N-2} 
\left\{
\begin{ytableau}
    ~ & ~ \\
    \vdots & ~ \\
    ~ & \none
\end{ytableau}
\right. \, .
\end{align}
It is clear that this procedure is valid for $N > 3$. For $N=3$ the last term
in eq.~\eqref{eq:young_altalt} does not exist, but the remaining terms do. 
Next,
we compute the dimensions of each individual term in
eqs.~\eqref{eq:young_symsym}--\eqref{eq:young_altalt} by using the known
formula for the dimension of Young tableaux in $\mathrm{SU}(N)$. Then, we have
\begin{equation}
\dim \left( 
{}_{N - 1}  \left\{
\begin{ytableau}
    ~ & ~ \\
    \vdots & \none \\
    ~
\end{ytableau}
\right.
\right) = \frac{(N + 1) N!}{N (N-2)!} = N^2 - 1 \, ,
\end{equation}
the adjoint representation. The dimension of the last term of eq.~\eqref{eq:young_symsym}
is given by 
\begin{equation}
\dim \left( 
{}_{N-1} 
\left\{
\begin{ytableau}
    ~ & ~ & ~ & ~ \\
    \vdots & \vdots & \none \\
    ~ & ~ & \none
\end{ytableau}
\right.
\right) = \frac{(N + 2) (N + 3) N! (N + 1)!}{4 (N+1)(N+2)(N-1)!(N-2)!} =
\frac{N^2(N-1)(N+3)}{4} \, ,
\end{equation}
and the dimension of the last term of eq.~\eqref{eq:young_altalt}
is given by 
\begin{equation}
\dim \left( 
{}_{N-2} 
\left\{
\begin{ytableau}
    ~ & ~ \\
    \vdots & ~ \\
    ~ & \none
\end{ytableau}
\right.
\right) = \frac{N (N + 1)!}{4 (N-2)(N-1)(N-4)!} =
\frac{N^2(N+1)(N-3)}{4} \, ,
\end{equation}
where we already recognize the sequences that we presented in table~\ref{table:nhdm_reps}
as
\begin{equation}
    a_N = \frac{N^2(N+1)(N-3)}{4} \quad \text{and} \quad b_N = \frac{N^2(N-1)(N+3)}{4} \, .
\end{equation}
Then, the full decomposition is given by
\begin{align}
    V = 3(\mathbf{1}) \, \oplus \, 3\left( \mathbf{N^2-1} \right)  \, 
    \oplus \, \left( \mathbf{\frac{N^2(N+1)(N-3)}{4}} \right) 
    \, \oplus \, \left( \mathbf{\frac{N^2(N-1)(N+3)}{4}} \right) \, ,
\end{align}
thus completing the proof.

\section{Hilbert series in $\mathrm{SU}(3)$}\label{sec:appB}

For completeness and such that we may provide results for the reader in case
there is need for Hilbert series in $\mathrm{SU}(3)$ we list the ungraded Hilbert series
that we computed before going to the case of the 3HDM.

\subsection{One $\mathbf{8}$}

For the case of one $\mathbf{8}$, $\dim V = 8$ and the Hilbert series
is given by
\begin{equation}
    H(K[V]^G, t) = \frac{1}{(1 - t^2) (1 - t^3)} \, ,
\end{equation}
where the Krull dimension is given by $\dim K[V]^G = 2$.

\subsection{Two $\mathbf{8}$'s}

For the case of two $\mathbf{8}$'s, $\dim V = 16$ and the Hilbert series
is given by
\begin{equation}
    H(K[V]^G, t) = \frac{1+t^6}{(1 - t^2)^3 (1 - t^3)^4 (1 - t^4)} \, ,
\end{equation}
where the Krull dimension is given by $\dim K[V]^G = 8$.

\subsection{Three $\mathbf{8}$'s}

For the case of three $\mathbf{8}$'s, $\dim V = 24$ and the Hilbert series
is given by
\begin{equation}
    H(K[V]^G, t) = \frac{1 + 3 t^3 + 7 t^4 + 9 t^5 + 16 t^6 + 18 t^7 + 25 t^8 + 30 t^9 + 
 34 t^{10} + \dots + t^{20}}
    {(1 - t^2)^6 (1 - t^3)^8 (1 - t^4)^2} \, ,
\end{equation}
where we omitted terms in the numerator but since it is palindromic, they are easy
to compute.
The Krull dimension is given by $\dim K[V]^G = 16$.

\subsection{One $\mathbf{27}$}

For the case of one $\mathbf{27}$, $\dim V = 27$ and the Hilbert series
is given by
\begin{equation}
    H(K[V]^G, t) = \frac{P(t)}
    {(1 - t^2) (1 - t^3)^2 (1 - t^4)^3 (1 - t^5)^4 (1 - t^6)^5 (1 - 
   t^7)^2 (1 - t^8) (1 - t^9)} \, ,
\end{equation}
where the numerator is too large to show in this form.
It is given by a palindromic polynomial of degree $74$ for which we
list the first $37$ coefficients
\begin{align}
    \mathrm{Coefficients} = \{ & 1, 0, 0, 0, 0, 0, 6, 15, 34, 73, 139, 258, 482, 851, 1486, 2531,
4148, 6603, \nonumber \\
& 10222, 15334, 22377, 31836, 44133, 59736, 79024, 102166, 129198, \nonumber \\
& 159916, 193698, 229724, 266860, 303653, 338555, 369956, 396288, \nonumber \\
&  416179, 428567, 432774, 428567, \dots \} \, .
\end{align}

The Krull dimension is given by $\dim K[V]^G = 19$.

\subsection{One $\mathbf{27}$ and one $\mathbf{8}$}

For the case of one $\mathbf{27}$ and one $\mathbf{8}$,
$\dim V = 35$ and the Hilbert series
is given by
\begin{equation}
    H(K[V]^G, t) = \frac{P(t)}
    {(1 - t^2)^2 (1 - t^3)^5 (1 - t^4)^6 (1 - t^5)^5 (1 - t^6)^4 (1 - 
   t^7)^3 (1 - t^8) (1 - t^9)} \, ,
\end{equation}
where the numerator is again too large to show in this form.
It is given by a palindromic polynomial of degree $95$ for which we
list the first $48$ coefficients
\begin{align}
    \mathrm{Coefficients} = \{ & 1, 0, 0, 0, 3, 18, 67, 177, 486, 1257, 3124, 7514, 17381, 38427,
81953, 168322, \nonumber \\
& 333782, 640599, 1191529, 2150336, 3771546, 6434476, 10689459,  \nonumber \\
& 17309116, 27342618, 42168281, 63541423, 93612205, 134923454, \nonumber \\
& 190359277, 263038374, 356147162, 472718380, 615332656, \nonumber \\
& 785802717, 984828515, 1211667639, 1463867914, 1737104026, \nonumber \\
& 2025120038, 2319866047, 2611789299, 2890287579, 3144319605, \nonumber \\
& 3363113113, 3536884827, 3657565340, 3719405182, \dots \} \, .
\end{align}

The Krull dimension is given by $\dim K[V]^G = 27$.

\subsection{One $\mathbf{27}$ and two $\mathbf{8}$'s}

For the case of one $\mathbf{27}$ and two $\mathbf{8}'s$,
$\dim V = 43$ and the Hilbert series
is given by
\begin{equation}
    H(K[V]^G, t) = \frac{P(t)}
    {\left(1-t^2\right)^4 \left(1-t^3\right)^7 \left(1-t^4\right)^7 \left(1-t^5\right)^7 
    \left(1-t^6\right) \left(1-t^7\right)^4 \left(1-t^9\right) \left(1-t^{12}\right)^4} \, ,
\end{equation}
where the numerator is again too large to show in this form.
It is given by a palindromic polynomial of degree $140$ for which we
list the first $71$ coefficients
\begin{align}
    \mathrm{Coefficients} = \{ & 1, 0, 0, 4, 19, 86, 345, 1146, 3827, 12155, 36644, 105650, 291364, \nonumber \\[2mm]
& 768516, 1948541, 4755476, 11193876, 25464839, 56078682, 119723055, \nonumber \\[2mm]
& 248171206, 500124587, 981064230, 1875483402, 3497777483, 6370373651, \nonumber \\[2mm]
& 11340751085, 19751538630, 33682079132, 56282123455, 92221097955, \nonumber \\[2mm]
& 148276916468, 234088220042, 363086857757, 553623088130, 830279318916, \nonumber \\[2mm]
& 1225352927988, 1780463739188, 2548218933862, 3593828962209, \nonumber \\[2mm]
& 4996550018628, 6850786076060, 9266669669316, 12369917820989, \nonumber \\[2mm]
& 16300770543819, 21211810440177, 27264530683140, 34624537424009, \nonumber \\[2mm]
& 43455377357385, 53911052118432, 66127418414828, 80212736271900, \nonumber \\[2mm]
& 96237813744402, 114226219326836, 134145178588251, 155897769157145, \nonumber \\[2mm]
& 179317092968054, 204162976726575, 230121790218424, 256809697567690, \nonumber \\[2mm]
& 283779596203437, 310531697108254, 336527558907554, 361207024717583, \nonumber \\[2mm]
& 384007472002695, 404384397694558, 421832396805165, 435905386086712, \nonumber \\[2mm]
& 446235036730978, 452546292870718, \dots \} \, .
\end{align}

The Krull dimension is given by $\dim K[V]^G = 35$.

\subsection{The 3HDM --- one $\mathbf{27}$ and three $\mathbf{8}$'s}\label{app:B_3hdm}

For the case of one $\mathbf{27}$ and three $\mathbf{8}'s$,
$\dim V = 51$ and the Hilbert series
is given by
\begin{equation}
    H(K[V]^G, t) = \frac{P(t)}
    {\left(1-t^2\right)^7 \left(1-t^3\right)^8 \left(1-t^4\right)^6 \left(1-t^5\right)^9 
    \left(1-t^6\right)^3 \left(1-t^7\right)^5 \left(1-t^9\right) \left(1-t^{12}\right)^4} \, ,
\end{equation}
where the numerator is again too large to show in this form.
It is given by a palindromic polynomial of degree $166$ for which we
list the first $84$ coefficients
\begin{align}
    \mathrm{Coefficients} = \{ & 1, 0, 0, 14, 60, 275, 1274, 5155, 20161, 75095, 264240, 885516, \nonumber \\[2mm]
& 2834022, 8671076, 25445735, 71779143, 195007048, 511236133, \nonumber \\[2mm]
& 1295636355, 3179224601, 7564677192, 17477956901, 39262235498, \nonumber \\[2mm]
& 85853683736, 182945453610, 380283856670, 771851230696, 1531042093644, \nonumber \\[2mm]
& 2970515462995, 5641679579607, 10496317684056, 19143367022397, \nonumber \\[2mm]
& 34248056625345, 60139123080143, 103713241781054, 175754293393734, \nonumber \\[2mm]
& 292818290150518, 479871564512319, 773908102619016, 1228807989461265, \nonumber \\[2mm]
& 1921728177626469, 2961332582100456, 4498163108992467, \nonumber \\[2mm]
& 6737417725756358, 9954327329157130, 14512200512939650, \nonumber \\[2mm]
& 20883028531401563, 29670307379183676, 41633441343039786, \nonumber \\[2mm]
& 57712749259880728, 79053716812829777, 107028745453572029, \nonumber \\[2mm]
& 143254274314089003, 189600830977724482, 248193344335470998, \nonumber \\[2mm]
& 321398971394200935, 411799796686223825, 522148088845635419, \nonumber \\[2mm]
& 655302376862169534, 814143442108891869, 1001470404161839701, \nonumber \\[2mm]
& 1219878370055240399, 1471620565854742943, 1758459380263239671, \nonumber \\[2mm]
& 2081512232982500064, 2441099505001154860, 2836602817597282373, \nonumber \\[2mm]
& 3266342593242487553, 3727483983236093502, 4215979803568794437, \nonumber \\[2mm]
& 4726558065572040847, 5252760002583348836, 5787032245088490827, \nonumber \\[2mm]
& 6320874073411653649, 6845037648170077207, 7349775938749897776, \nonumber \\[2mm]
& 7825129987928475739, 8261244356119359550, 8648697317481959007, \nonumber \\[2mm]
& 8978830801547884154, 9244064360275753923, 9438177639914705696, \nonumber \\[2mm]
& 9556547018948486707, 9596324149303158505, \dots \} \, .
\end{align}

The Krull dimension is given by $\dim K[V]^G = 43$.

\bibliographystyle{JHEP}
\bibliography{bibliography}

\end{document}